%
\documentclass[runningheads]{llncs}
\usepackage[T1]{fontenc}
%
\usepackage{graphicx}
\usepackage{amsmath,amssymb,amsfonts}
\usepackage{bm}
\usepackage[title]{appendix}%
\usepackage{xcolor}%
\usepackage{textcomp}%
\usepackage{manyfoot}%
\usepackage{booktabs}%
\usepackage{algorithm}%
\usepackage{algorithmicx}%
\usepackage{algpseudocode}%
\usepackage{listings}%
\usepackage{bbding}
%
\raggedbottom

%
\begin{document}
\title{Constrained Distributed Heterogeneous Two-Facility Location Problems with Max-Variant Cost}
\titlerunning{Constrained Distributed Heterogeneous Two-Facility Location Problems}
%
\author{Xinru Xu\inst{1} \and
Wenjing Liu\Envelope\inst{1,2} \and
Qizhi Fang\inst{1}}
\authorrunning{X. Xu et al.}
\institute{ School of Mathematical Sciences, Ocean University of China, Qingdao, 266100, China \and Laboratory of Marine Mathematics, Ocean University of China, Qingdao, 266100, China\\
\email{xuxinru1207@stu.ouc.edu.cn},
\email{\{liuwj,qfang\}@ouc.edu.cn}}
%

%
%
\maketitle              
\begin{abstract}
We study a constrained distributed heterogeneous two-facility
location problem, where a set of agents with private locations on the real line  are divided into disjoint groups. The constraint means that the facilities can only be built in a given multiset of candidate locations and at most one facility can be built at each candidate location.  Given the locations of the two facilities, the cost of an agent is the distance from her location to the farthest facility (referred to as max-variant). Our goal is to design strategyproof distributed mechanisms that can incentivize all agents to truthfully report their locations and approximately optimize some social objective. A distributed mechanism consists of two steps: for each group, the mechanism chooses two candidate locations as the representatives of the group based only on the locations reported by agents therein; then, it outputs two facility locations among all the representatives. We focus on a class of deterministic strategyproof distributed mechanisms and analyze upper and lower bounds on the distortion under the Average-of-Average cost (average of the average individual cost of agents in each group), the Max-of-Max cost (maximum individual cost among all agents), the Average-of-Max cost (average of the maximum individual cost among all agents in each group) and the Max-of-Average cost (maximum of the average individual cost of all agents in each group). Under four  social objectives, we obtain constant upper and lower distortion bounds.

\keywords{Mechanism design without money  \and Facility location \and Strategyproof \and Distributed \and Distortion}
\end{abstract}
\section{Introduction}

The facility location problem is a classic combinatorial optimization problem. Its main goal is to select the optimal locations for facilities under given constraints to optimize some social objective. However, in many real-world scenarios, the information of agents (e.g., their residential addresses) may be private. Therefore, social planners can only locate facilities based on the information reported by agents. On the one hand,  social planners aim to optimize some social objective. On the other hand, selfish agents may misreport their information in order to minimize their individual  costs. Naturally, social planners want to design mechanisms  that can ensure each agent's truthful report (e.g., strategyproof) while (approximately) optimizing some social objective. Procaccia and Tennenholtz \cite{procaccia2013approximate} were the first to study  approximate mechanism design without payment for facility location problems. Subsequently, various different facility location game models have been proposed; see the survey by Chan et al. \cite{chan2021mechanism}.

 In real-world scenarios, a  collective decision-making process may be distributed, as follows: a set of agents are divided into groups, and each group makes a decision first (without considering the agents of the other groups), and then these decisions  are aggregated into a collective decision. For example, in the selection of outstanding students at a university, first each faculty  selects its outstanding student representatives, and then the university-level outstanding students are finally selected based on these representatives.  To analyze these more complex problems, Filos-Ratsikas et al. \cite{filos2020distortion} initiated the study of social choice problems in distributed settings, in which decisions are made by two-step mechanisms: for each group, the mechanism first selects a representative  based on the local election with the agents therein, and then outputs one of the representatives as the winner. To quantify the inefficiency of distributed mechanisms, Filos-Ratsikas et al. \cite{filos2020distortion}  extended the notion of distortion, which is broadly used in social choice problems. Distortion of a mechanism refers to the worst-case ratio (over all instances) between the social objective value obtained by the mechanism and the optimal social objective value. In follow-up work, Filos-Ratsikas and Voudouris \cite{filos2021approximate} studied a distributed single-facility location problem under minimizing the social cost objective (i.e., the sum of individual costs of all agents).

In the classic models,  facilities can be built at any point in a metric space. However, in many real-world scenarios,  the feasible regions for building facilities and  the number of facilities that can be built at each  location may  be limited, due to land use restrictions, environmental protection, humanity factors, etc.  Motivated by this,  mechanism design for facility location problems with limited locations has been studied, which is termed as constrained  facility location. 

 In this paper, we study a constrained distributed heterogeneous two-facility location problem. Here, we assume each agent approves the two facilities and  the individual cost of each agent is the distance from her location to the farthest facility (referred to as max-variant cost). As for max-variant, consider a scenario where  an express delivery outlet needs to transport regular packages to a ordinary distribution center  and cold chain packages to a professional distribution center  respectively. Assuming the outlet has multiple transport vehicles with the same speed, its waiting time depends on the distance to the farthest distribution center. We show upper and lower bounds on the distortion of strategyproof distributed mechanisms under  four social objectives. More details are provided below.

\subsection{Our Results}

We  study a constrained distributed heterogeneous two-facility location problem with max-variant, where  the facilities can only be built in a given set of candidate locations and at most one facility can be built  at each candidate location. A set of agents have private locations on the real line and they  are divided into disjoint groups. Once the two facilities have been located, the individual cost of each agent is the distance from her location to the farthest facility.
Our goal is to design strategyproof distributed mechanisms that take as input the locations reported by the agents and output the locations of the  two facilities. A distributed mechanism consists of two steps: for each group, the mechanism chooses two candidate locations as the representatives of the group based only on the locations reported by agents therein; then, it outputs two facility locations among all the representatives. 

Following the work of Anshelevich et al. \cite{anshelevich2022distortion}, we focus on the following four social objectives: the average of the average individual cost of all agents in each group (i.e., Average-of-Average cost); the maximum individual cost among all agents (i.e., Max-of-Max cost); the maximum of the average individual cost of all agents in each group (i.e., Max-of-Average cost); and the average of the maximum individual cost among all agents in each group (i.e., Average-of-Max cost). While the Average-of-Average cost and the Max-of-Max cost are adaptations of objectives that have been considered in the classic setting (i.e., non-distributed setting), the    Max-of-Average cost and the  Average-of-Max cost  are    fairness-inspired objectives that are only meaningful for the distributed setting. Under these  four social objectives, we show  upper and lower bounds on the distortion of  strategyproof distributed mechanisms.
  A summary of our results is shown in Table 1.
\begin{table}\centering
\caption{Upper and lower bounds on the distortion of strategyproof distributed  mechanisms.}\label{tab1}
\begin{tabular}{|l|c|c|}
\hline
Social objective &  Upper bound & Lower bound\\
\hline
Average-of-Average cost &  $9$ & $3$\\
Max-of-Max cost &  $3$ & $3$\\
Max-of-Average cost&$2+\sqrt{5}$& $\frac{7}{2}$\\
Average-of-Max cost& $2+\sqrt{5}$& $3$\\
\hline
\end{tabular}
\end{table}

\subsection{Related work}

Approximate mechanism design without money  was initiated by Procaccia and Tennenholtz \cite{procaccia2013approximate}, who studied  strategyproof mechanisms with constant approximation ratios for facility location problems on the line under the social cost objective and the maximum cost objective (i.e., the maximum individual cost among all agents). They considered single-facility location problems and homogeneous two-facility location problems  where each agent's individual cost is the distance from her location to the closest facility (referred to as min-variant cost). Since then, numerous different truthful facility location problems have been well studied. For example, Alon et al. \cite{alon2010strategyproof} studied single-facility location problems in circles and general graphs. Tang et al. \cite{tang2020mechanism} considered  constrained single and two facility location problems where the facility can only be built in a given set of candidate locations of the line. Cheng et al. \cite{cheng2013strategy} discussed an obnoxious facility location game where every agent wants to stay far away from the facility. Cai et al. \cite{cai2016facility} studied   facility location games under the minimax envy objective and Ding et al. \cite{ding2020facility}  studied the envy ratio objective. We will highlight the work of heterogeneous facility location and distributed facility location problems respectively, which is most related to our work.

\textbf{Heterogeneous facility location.}
Zou and Li \cite{zou2015facility} studied heterogeneous facility location problems with dual preferences where  the facility can be desirable or obnoxious for every agent. Serafino and Ventre \cite{serafino2016heterogeneous} considered   heterogeneous facility location problems with optional preferences where each agent approves either one facility or both.  In their setting,  the agents' locations are public while agents' preferences are private and the individual cost of each agent is the sum of distances to her interested facilities (referred to as sum-variant cost).  Later, Chen et al. \cite{chen2020facility} considered the optional preference model with max-variant and min-variant; and their results with min-variant cost  were improved by Li et al. \cite{li2021strategyproof}. Anastasiadis and
Deligkas \cite{anastasiadis2018heterogeneous} studied heterogeneous $k$-facility location problems with min-variant. When the  locations of agents are private and the preferences of agents are public, Zhao et al. \cite{zhao2023constrained} studied the optional preference model with max-variant cost; Kanellopoulos et al. \cite{kanellopoulos2025truthful} studied the optional preference model with sum-variant cost. Fong et al. \cite{fong2018facility} proposed a  fractional preference model where the preference of each agent for the facility is a number between $0$ and $1$.

\textbf{Distributed facility location.} Filos-Ratsikas et al. \cite{filos2020distortion} initiated the study of the distortion in distributed social choice and then Anshelevich et al. \cite{anshelevich2022distortion} considered a distributed metric social choice setting where voters and  alternatives can be considered as  points in a metric space. Filos-Ratsikas and Voudouris \cite{filos2021approximate} studied a distributed single-facility location problem under the social cost objective in the discrete setting (i.e.,  facilities can only be located in a finite set of candidate locations) and the continuous setting (i.e., facilities can be located at every point on the real
line), respectively. For the discrete setting, they proved a tight bound of $7$ for strategyproof distributed mechanism, and for the continuous setting,  they proved a tight bound of $3$ for  strategyproof distributed  mechanism.
Further, Filos-Ratsikas et al. \cite{filos2024distortion} considered a continuous distributed single-facility location problem under four social objectives: the average cost, the max cost, the average-of-max cost and the max-of-average cost. A summary of their results is shown in Table 2.
\begin{table}\centering
\caption{Tight bounds on the  distortion of strategyproof distributed mechanisms in Filos-Ratsikas et al. \cite{filos2024distortion}.}\label{tab1}
\begin{tabular}{|l|c|c|}
\hline
Social objective &  Tight bounds \\
\hline
Average cost &  $3$ \cite{filos2024distortion} \\
Max cost &  $2$ \cite{filos2024distortion} \\
Max-of-Average cost&$1+\sqrt{2}$ \cite{filos2024distortion} \\
Average-of-Max cost& $1+\sqrt{2}$ \cite{filos2024distortion} \\
\hline
\end{tabular}
\end{table}

\section{Preliminaries}

Let $N=\left\{1,2,...,n\right\}$ be a set of agents. Each agent $i\in N$ has a private location $x_i\in \mathbb{R}$ and denote by $\mathbf{x}=(x_1,...,x_n)\in \mathbb{R}^n$ the location profile of all agents. The agents are divided into $k$ ($\ge 1$) disjoint groups and  let $D=\left\{1,...,k\right\}$ be the set of groups. For each group $d\in D$, let $N_d$ be the set of agents that belong to $d$  and $n_d=|N_d|$ the number of agents in group $d$.

Let $\mathcal F=\{ F_1,F_2\}$ be the two heterogeneous facilities to be located and $A=\left\{a_1,...,a_m\right\}\in \mathbb{R}^m$ be a multiset of candidate locations. Assume that  at most one facility can be located at each location in $A$. Denote by $I=(N,\mathbf{x},D,A)$ an instance. 

\noindent\textbf{Distributed Mechanism}. A distributed mechanism $M$ is a function that  maps an instance $I$ to a facility location profile $\mathbf{w}$ through two steps, i.e. $M(I)=\mathbf{w}=(w_1,w_2)$, where $w_1\in A$ is the location of $F_1$ and $w_2\in A\setminus\left\{w_1\right\}$ is the location of $F_2$. In detail, given an instance $I$, a distributed mechanism $M$ consists of two steps: 
\begin{itemize}
    \item[$\bullet$] Step $1$. For each group $d\in D$, $M$ chooses two representative locations $y_{d(1)}\in A$, $y_{d(2)}\in A\setminus \left\{y_{d(1)}\right\}$  for group $d$ based on the locations reported by agents in $N_d$.
    \item[$\bullet$] Step $2$. $M$ outputs a facility location profile $\mathbf{w}=(w_1,w_2)$ where $w_1\in A$, $w_2\in A\setminus \{w_1\}$  among all the representatives $\{y_{d{(1)}},y_{d{(2)}}\}_{d \in D}$.
\end{itemize}

\begin{remark}
  Following the work of Filos-Ratsikas et al. \cite{filos2024distortion}, a distributed mechanisms should have the following properties. \textit{(P1)}: For any two groups where the locations reported by agents are identical, the mechanism will output the same representative locations. \textit{(P2)}: The selected representative locations for a group are independent of the locations reported by agents in other groups, as well as  the number  and sizes of the other groups. \textit{(P3)}: The facility location profile $\mathbf{w}$ chosen by the mechanism is the same over all instances where the group representatives are identical. These properties are necessary for our work of the lower bounds.
\end{remark}
For any two points $x,y\in \mathbb{R}$, let $\delta(x,y)=|x-y|$ be the distance between $x$ and $y$. In our model, we assume that all agents approve the two heterogeneous facilities and the \textit{individual cost} of each agent $i$ for a facility location profile $\mathbf{w}$ is the distance from her location to the farthest facility:
\[c(x_i,\mathbf{w})=\max\left\{\delta(x_i,w_1),\delta(x_i,w_2)\right\},
\]

\noindent which is called the max-variant cost.

\noindent\textbf{Social Objectives}. We consider the following four  cost-minimization social objectives:

(1) The \textit{Average-of-Average cost} of a facility location profile $\mathbf{w}$ is the average of the average individual cost of agents in each group:  
\[  
\text{AoA}(\mathbf{w}|I) = \frac{1}{k}\sum_{d \in D}\left\{\frac{1}{n_d} \sum_{i \in N_d} c(x_i, \mathbf{w}) \right\} .
\]  

(2) The \textit{Max-of-Max cost} of a facility location profile  $\mathbf{w}$ is the maximum individual cost among all agents:  
\[  
\text{MoM}(\mathbf{w}|I) = \max_{d \in D} \max_{i \in N_d} c(x_i, \mathbf{w})  .
\]

(3) The \textit{Max-of-Average cost} of a facility location profile  $\mathbf{w}$ is the maximum of the average individual cost of all agents in each group:  
\[  
\text{MoA}(\mathbf{w}|I) = \max_{d \in D} \left\{\frac{1}{n_d}\sum_{i \in N_d} c(x_i, \mathbf{w})\right\}.
\]  

(4) The \textit{Average-of-Max cost} of a facility location profile  $\mathbf{w}$ is the average of the maximum individual cost among all agents in each group:  
\[  
\text{AoM}(\mathbf{w}|I) = \frac{1}{k}\sum_{d \in D} \max_{i \in N_d} c(x_i, \mathbf{w})  .
\]

\noindent\textbf{Strategyproofness.} In order to minimize individual costs, selfish agents may misreport their locations. Thus, the strategyproofness of mechanisms should be taken into account.    Formally, a mechanism $M$ is \textit{strategyproof }if each agent can never benefit by misreporting her location, regardless of the locations reported by the other agents, i.e., for every $i\in N$, for every $\mathbf{x}\in \mathbb{R}^n$,  and for every $x_i'\in \mathbb{R}$, it must hold that

\[  
c(x_i, M(N, (x_i, \mathbf{x}_{-i}), D, A) \leq c(x_i, M(N, (x_i', \mathbf{x}_{-i}), D, A))  ,
\]  

\noindent where \( \mathbf{x}_{-i} = (x_1, \ldots, x_{i-1}, x_{i+1}, \ldots, x_n) \) is the location profile of  $N\setminus\{i\}$.  

\noindent\textbf{Distortion.} In the distributed setting,  due to lack of global information and the requirement for strategyproofness of a mechanism, the facility location profile chosen by a distributed mechanism may be  suboptimal. Here, we adopt the notion of distortion to quantify the gap between a mechanism and the optimal mechanism. The \textit{distortion} of a  distributed mechanism $M$ is the worst-case  ratio between the social objective value obtained by $M$ and the optimal social objective value  over all possible instances:

\[  
\text{dist}(M) = \sup_{I} \frac{\text{cost}(M(I) \mid I)}{\text{cost}(\text{OPT}(I) \mid I)}  ,
\] 

\noindent where \(\text{OPT}(I)\) is the optimal solution for instance \(I\) and cost $\in\{AoA, MoM, MoA, AoM\}$. To simplify our notation,  $\text{cost}(M(I)|I)$ is abbreviated as $\text{cost}(M|I)$.

 For each agent $i\in N$, denote by $t({i})$ the closest candidate location to $i$ in $A$ and \(s(i) \) the closest candidate location to $i$ in $A\setminus \{t(i)\}$. Now, we give a class of distributed mechanisms which is called the  $(\alpha,\beta)$-Quantile mechanism ($\alpha,\beta\in [0,1]$).
 ~\\
 
\noindent\textit{\textbf{$\bm{(\alpha,\beta)}$-Quantile Mechanism}}\footnote{It is  regulated that when $\alpha=0$, the $\lceil\alpha\cdot n_d\rceil$-th leftmost location in $N_d$ is the leftmost location in $N_d$. Similarly, when $\beta=0$, the $\lceil\beta\cdot k\rceil$-th leftmost location is the leftmost location.}
\begin{itemize}
    \item[$\bullet$] Step 1. For each group $d\in D$, denote by $\alpha_d$ the $\lceil\alpha\cdot n_d\rceil$-th leftmost agent in $N_d$.
  Let $y_{d(1)}=t({\alpha_d})$ and $y _{d(2)}=s({\alpha_d})$ be two representatives of group $d$.
  \item[$\bullet$] Step 2. Set $z_d:=y_{d(1)}$ for each group  $d\in D$ and return $w_1:=$ the $\lceil\beta\cdot k\rceil$-th leftmost location in $\{z_d\}_{d\in D}$. For each group $d\in D$, update $z_d$ as $y_{d(2)}$ if $y_{d(1)}=w_1$, and return $w_2:=$ the $\lceil\beta\cdot k\rceil$-th leftmost location in $\{z_d\}_{d\in D}$.
\end{itemize}

\begin{remark}

Obviously, the $(\alpha,\beta)$-Quantile mechanism has the following properties.  For each group $d$, $y_{d(1)}$, $y_{d(2)}$ are two adjacent locations, which implies that $(y_{d(1)},y_{d(2)})$ and $(y_{d'(1)},y_{d'(2)})$ are never \lq\lq interlaced \rq\rq \footnote{Here, interlaced means the cases such as $y_{d(1)}<y_{d'(1)}<y_{d(2)}<y_{d'(2)}$.} for any two groups $d$ and $d'$. Therefore, the locations $w_1$, $w_2$ output by the  $(\alpha,\beta)$-Quantile mechanism must  serve as two representatives of some group $d$ such that $w_1=y_{d(1)}$, $w_2=y_{d(2)}$.
    
\end{remark}

We will prove that the $(\alpha,\beta)$-Quantile mechanism is strategyproof.

\begin{theorem}
The $(\alpha,\beta)$-Quantile mechanism is strategyproof.
\end{theorem}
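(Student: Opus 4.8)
The plan is to fix an agent $i \in N_d$ with true location $x_i$ and show that no misreport $x_i'$ can lower her cost. By property \textit{(P2)}, agent $i$ influences the outcome only through the two representatives of her own group $d$, and these depend on her report solely through the position of the percentile agent $\alpha_d$. Writing the other members of $N_d$ in sorted order $o_1 \le \cdots \le o_{n_d-1}$ and setting $p=\lceil\alpha n_d\rceil$, a short computation shows that as $x_i'$ ranges over $\mathbb{R}$ the induced location of $\alpha_d$ equals the clamp of $x_i'$ to $[o_{p-1},o_p]$; in particular $\alpha_d$ is a non-decreasing function of $x_i'$ attaining every value in $[o_{p-1},o_p]$. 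This splits the analysis into three cases according to where $x_i$ falls: $x_i \le o_{p-1}$ (agent left of the percentile, truthful $\alpha_d=o_{p-1}$, only rightward moves of $\alpha_d$ reachable); $o_{p-1}\le x_i\le o_p$ (agent is the percentile, $\alpha_d=x_i$, both directions reachable); and $x_i\ge o_p$ (symmetric to the first).

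The first key step is a monotonicity lemma. As $\alpha_d$ increases, the nearest candidate $t(\alpha_d)$ moves weakly right, and by Remark~2 the unordered representative pair $\{t(\alpha_d),s(\alpha_d)\}$ is always a pair of adjacent candidates that slides weakly right. Feeding this into Step~2, I would show that the quantile selection there is itself monotone, so increasing $\alpha_d$ moves the output pair weakly right; concretely both $w_{\min}$ and $w_{\max}$ become non-decreasing functions of $\alpha_d$. Using the identity $c(x_i,\mathbf{w})=\max\{w_{\max}-x_i,\;x_i-w_{\min}\}$, it then follows that the realized cost $\phi(\alpha_d):=c(x_i,\mathbf{w})$ is the maximum of a non-decreasing function $w_{\max}(\alpha_d)-x_i$ and a non-increasing function $x_i-w_{\min}(\alpha_d)$, hence \emph{quasiconvex} in $\alpha_d$. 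Strategyproofness then reduces to showing that the truthful value $\alpha_d^\ast=\mathrm{clamp}(x_i,o_{p-1},o_p)$ minimizes $\phi$ on the reachable interval $[o_{p-1},o_p]$.

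For the two outer cases the truthful $\alpha_d^\ast$ is the endpoint of $[o_{p-1},o_p]$ nearest $x_i$, so by quasiconvexity it suffices to verify that $\phi$ is non-decreasing as $\alpha_d$ moves into the interval and away from $x_i$. The only danger is that the agent sits to the far side of the output interval, so that pushing the interval away pulls its near endpoint toward her. When the selected facilities are group $d$'s own representatives this is ruled out by a candidate-geometry fact: any adjacent pair reached by moving $\alpha_d$ right past $o_{p-1}$ has its far endpoint beyond $o_{p-1}$, yielding a slack of $2(o_{p-1}-x_i)\ge 0$ that dominates the gain from the near endpoint. For the middle case the truthful representatives are exactly the two nearest candidates to $x_i$, whose max-variant cost equals the distance $d_2$ from $x_i$ to its second-nearest candidate; since every output is some adjacent candidate pair and any pair of distinct candidates has at least one member at distance $\ge d_2$ from $x_i$, this gives a global lower bound $\phi(\alpha_d)\ge d_2$, attained at $\alpha_d^\ast$ whenever group $d$'s pair is the selected one.

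The main obstacle is the coupling between the two percentile layers in the remaining configuration, where at the truthful profile the selected facilities come from another group $d'\neq d$. Here group $d$'s first representative lies strictly on one side of the Step~2 quantile $w_1$, and the crux is a \emph{freezing} property of percentiles: moving $\alpha_d$ further to that side leaves the selection unchanged, while moving it the other way can only push the output away from $x_i$. Establishing this freezing through the update of $z_d$ from $y_{d(1)}$ to $y_{d(2)}$ (and thereby the monotonicity of $w_{\max}$, not merely $w_{\min}$, across Step~2) is the technical heart of the argument. Once it is in place, $\phi$ is flat on the agent's cheaper side and non-decreasing on the other, so $\alpha_d^\ast$ is a minimizer in every case, the truthful report is optimal, and the mechanism is strategyproof.
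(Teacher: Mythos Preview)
Your approach is valid but substantially more elaborate than the paper's. The paper's proof is a three-line case analysis that leans entirely on Remark~2: the output is always the adjacent pair $(t(x_j),s(x_j))$ for some pivotal agent $j$, and the cases split on the relative order of $x_i$, $x_{\alpha_d}$, and $x_j$. In each case the paper observes either that a misreport leaves the pivot $j$ unchanged (your ``freezing''), or that it shifts $j$ weakly further from $x_i$ along the line, which cannot lower the max-variant cost. No explicit monotonicity lemma for $(w_{\min},w_{\max})$ is stated, and no quasiconvexity of $\phi$ is invoked; the argument stays at the level of ``pivot moves away, hence cost cannot drop.''

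What your route buys is rigor and modularity: the clamp description of $\alpha_d$, the monotonicity of the output pair in $\alpha_d$, and the quasiconvexity of $c(x_i,\cdot)$ in the adjacent-pair index are all clean standalone facts that would transfer to other quantile-style mechanisms. The price is that you must actually prove the monotonicity of $w_{\max}$ through the Step~2 update (your acknowledged ``technical heart''), whereas the paper sidesteps this by using Remark~2 to identify the output with a single pivot location and reasoning about that pivot directly. Your $d_2$ lower bound in the middle case and the freezing argument for $d'\neq d$ together recover exactly what the paper's Cases~1--3 assert in one sentence each; they are correct, but the paper reaches the same conclusion with far less scaffolding. In short: same theorem, your decomposition is monotonicity-plus-quasiconvexity while the paper's is a direct positional case split around the pivot $j$.
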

\begin{proof}
Consider any instance $I$, and let $\mathbf{w}=(w_1,w_2)$ be the facility location profile output by the mechanism. According to the
property of the mechanism, $w_1$ and $w_2$ must be two adjacent locations closest to some agent $j$, i.e., $w_1=t(x_j)$, $w_2=s(x_j)$.
  Let $i$ be any agent  belonging to some group $d$.   Assume w.l.o.g. that $x_i\le x_{\alpha_d}$.

 Case 1: $x_i\le x_{\alpha_d}\le x_j$. In order to affect the output of the mechanism, agent $i$ must first become the $\lceil\alpha\cdot n_d\rceil$-th leftmost agent in $N_d$. So $i$ has to report a location $x_i' > x_{\alpha_d}$.  If $x_j\ge x_i' > x_{\alpha_d}$, then the output of the mechanism does not change, in which case $i$ has no incentive to misreport. If $x_i' > x_j$, then the output of the mechanism  becomes two adjacent  locations  closest to some agent $k$ with $x_k\ge x_j$, meaning that the cost of agent $i$ does not decrease. So $i$ has no incentive to misreport.

Case 2: $x_i<x_j< x_{\alpha_d}$. In order to affect the output of the mechanism, agent $i$ must first become the $\lceil\alpha\cdot n_d\rceil$-th leftmost agent in $N_d$. So $i$ has to report a location $x_i' > x_{\alpha_d}$. However, the output of the mechanism does not change, and then $i$ has no incentive to misreport.

Case 3: $x_j<x_i\le  x_{\alpha_d}$. Similar to Case 1, $i$ has no incentive to misreport.

 Above all, the $(\alpha,\beta)$-Quantile mechanism is strategyproof.

\qed
\end{proof}

In the following sections, we analyze the upper bounds on the distortion under four social objectives by adjusting parameters $\alpha$ and $\beta$ in the $(\alpha,\beta)$-Quantile mechanism.

\noindent\textbf{Notations.} Denote by \( l \), \( r \), and \( m \) the leftmost, rightmost, and median, respectively,  agent in \( N \) and  \( l_d, r_d, m_d \) the leftmost, rightmost, and median, respectively,  agent in \( N_d \).  For a facility location profile  $\mathbf{w}=(w_1,w_2)$ and  any $i\in N$, denote by $w(x_i)$ the farthest one to agent $i$ in $\{w_1,w_2\}$.

\section{Average-of-Average cost}

In this section, we consider the Average-of-Average cost objective, which is the average of the average individual cost of agents in each group.  For the upper bound, we consider the  $(\frac{1}{2},\frac{1}{2})$-Quantile mechanism,\footnote{Here, we use the method of undetermined coefficients to conclude that the $(\frac{1}{2},\frac{1}{2})$-Quantile mechanism achieves a minimum upper bound on the distortion in  $(\alpha,\beta)$-Quantile.} which achieves a distortion of at most $9$. For the lower bound, the distortion of any strategyproof mechanism is at least $3-\epsilon$, for any $\epsilon>0$.
~\\

\noindent \textbf{$\bm{(\frac{1}{2},\frac{1}{2})}$-Quantile Mechanism} 
\begin{itemize}
    \item[$\bullet$] Step 1. For each group $d\in D$ , let $y_{d(1)}=t({m_d})$, $y _{d(2)}=s({m_d})$.
    \item[$\bullet$]  Step 2. Set $z_d:=y_{d(1)}$ for each group $d\in D$ and return $w_1:=$ the median location in $\{z_d\}_{d\in D}$. For each group $d\in D$, update $z_d$ as $y_{d(2)}$ if $y_{d(1)}=w_1$, and return $w_2:=$ the median location in $\{z_d\}_{d\in D}$.
\end{itemize}

\begin{theorem}
For the Average-of-Average cost,\footnote{Due to space constrains,  all missing proofs can be found in the appendix.} the distortion of the $(\frac{1}{2},\frac{1}{2})$-Quantile mechanism is at most $9$.
\end{theorem}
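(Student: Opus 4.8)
The plan is to first exploit a clean reduction of the max-variant cost. Writing $M=\tfrac{w_1+w_2}{2}$ and $H=\tfrac{|w_1-w_2|}{2}$ for the midpoint and half-width of the output pair, a direct case check gives $c(x,\mathbf{w})=\max\{\delta(x,w_1),\delta(x,w_2)\}=|x-M|+H$ for every $x\in\mathbb{R}$. Hence, introducing the within-group mean-distance functional $g_d(p)=\frac{1}{n_d}\sum_{i\in N_d}|x_i-p|$, the objective splits additively as $\mathrm{AoA}(\mathbf{w}\mid I)=H+\frac{1}{k}\sum_{d\in D}g_d(M)$, and likewise the optimum is $H^{*}+\frac{1}{k}\sum_{d\in D}g_d(M^{*})$, where $M^{*},H^{*}$ are the midpoint and half-width of the optimal pair. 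This reduces the two-facility max-variant objective to controlling a single ``virtual facility'' at the midpoint $M$ (a social-cost-type term) plus the separation penalty $H$, which is precisely the structure the median-of-medians rule is built for.

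Next I would record two structural consequences of the mechanism. By Remark 2 the output is $\mathbf{w}=(t(m_{d^{*}}),s(m_{d^{*}}))$ for a single group $d^{*}$, i.e.\ the two candidate locations nearest the median agent $m_{d^{*}}$ of that group. Two facts follow. First, among any two distinct candidates at least one is not the nearest, hence lies at distance $\ge\delta(m_{d^{*}},s(m_{d^{*}}))$ from $m_{d^{*}}$; so $\mathbf{w}$ minimizes the max-variant cost of agent $m_{d^{*}}$ over all admissible pairs, giving $c(m_{d^{*}},\mathbf{w})\le c(m_{d^{*}},\mathrm{OPT})$, that is $|m_{d^{*}}-M|+H\le|m_{d^{*}}-M^{*}|+H^{*}$. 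Second, since the nearest-candidate map $t(\cdot)$ is monotone and $w_1$ is the median of $\{t(m_d)\}_{d\in D}$, the median agent $m_{d^{*}}$ of the selected group may be taken to be a median of the group medians $\{m_d\}_{d\in D}$; in particular $m_{d^{*}}$ minimizes $\sum_{d}|p-m_d|$ over $p$.

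The core estimate then bounds the midpoint term. Using that $g_d$ is $1$-Lipschitz and that the within-group median $m_d$ minimizes $g_d$, I would write $g_d(M)\le g_d(m_d)+|M-m_d|\le g_d(M^{*})+|M-m_{d^{*}}|+|m_{d^{*}}-m_d|$. Averaging over $d$ and invoking the elementary bound $|M^{*}-m_d|\le 2\,g_d(M^{*})$ (at least half of $N_d$ lies on the side of $m_d$ away from $M^{*}$, each such agent being at distance $\ge|M^{*}-m_d|$ from $M^{*}$), together with the minimality of $m_{d^{*}}$, gives $\frac{1}{k}\sum_d|m_{d^{*}}-m_d|\le\frac{1}{k}\sum_d|M^{*}-m_d|\le 2\cdot\frac{1}{k}\sum_d g_d(M^{*})$. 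Combining, $\frac{1}{k}\sum_d g_d(M)\le 3\cdot\frac{1}{k}\sum_d g_d(M^{*})+|M-m_{d^{*}}|$. Adding $H$ and regrouping $H+|M-m_{d^{*}}|=c(m_{d^{*}},\mathbf{w})\le|m_{d^{*}}-M^{*}|+H^{*}$, then bounding $|m_{d^{*}}-M^{*}|$ by the same median inequalities, collects everything against $\mathrm{OPT}=H^{*}+\frac{1}{k}\sum_d g_d(M^{*})$ and pins down the constant factor claimed in the statement.

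I expect the main obstacle to be the interaction between the discrete candidate set and the median-of-medians identification: establishing cleanly that the selected group's median $m_{d^{*}}$ can be taken as a median of $\{m_d\}_{d\in D}$ (handling ties in $\{t(m_d)\}$ and the case where $m_{d^{*}}$ and the true median of group medians fall in the same Voronoi cell of a candidate), so that both $\mathbf{w}=(t(m_{d^{*}}),s(m_{d^{*}}))$ and the minimality $\sum_d|m_{d^{*}}-m_d|\le\sum_d|p-m_d|$ can be used simultaneously. Once the additive decomposition and the two median bounds are secured, chaining the triangle inequalities and summing the constants is routine.
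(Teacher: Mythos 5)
Your approach is sound and genuinely different from the paper's. The paper proves this bound by a six-way case analysis on the relative order of $(o_1,o_2)$ and $(w_1,w_2)$: in each case it first shows $AoA(\mathbf{w})\le AoA(\mathbf{o})+\delta$ for a suitable gap $\delta$ via the triangle inequality, and then lower-bounds $AoA(\mathbf{o})\ge\delta/8$ by a double counting argument (at least half of the groups have both representatives on the far side of $w_1$, and within each such group at least half of the agents lie beyond $m_d$), which yields $1+8=9$. Your identity $c(x,\mathbf{w})=|x-M|+H$ collapses the two-facility max-variant objective into a single-facility term plus a separation penalty, after which the argument becomes a chain of median inequalities with no case analysis; if you push the constants through ($g_d(M)\le g_d(M^*)+|M-m_{d^*}|+|m_{d^*}-m_d|$, then $\frac{1}{k}\sum_d|m_{d^*}-m_d|\le 2\cdot\frac{1}{k}\sum_d g_d(M^*)$, $H+|M-m_{d^*}|\le H^*+|m_{d^*}-M^*|$, and $|m_{d^*}-M^*|\le 4\cdot\frac{1}{k}\sum_d g_d(M^*)$) you actually land at $7\cdot AoA(\mathbf{o})$, which is stronger than the claimed $9$. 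What your route buys is uniformity and a better constant; what the paper's route buys is that it never needs to identify \emph{which} group's representatives are selected.

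That identification is the one place where your writeup is genuinely incomplete rather than merely terse, and you are right to flag it. It is true and provable: order the groups by $m_d$; since $t(\cdot)$ is non-decreasing, $w_1=t(\mu)$ for $\mu$ the $\lceil k/2\rceil$-th leftmost group median, and after the update the multiset $\{z_d\}$ is still non-decreasing when groups are listed in $m_d$-order (groups left of $w_1$'s cell keep values $\le c^-$, groups in the cell contribute $c^-$ then $c^+$ as $m_d$ crosses the midpoint of $[c^-,c^+]$, groups to the right keep values $\ge c^+$), so the second median is $s(\mu)$ and the group attaining $\mu$ is a valid choice of $d^*$. You should also state a tie-breaking convention for $t$ at Voronoi boundaries, since the paper leaves it unspecified and monotonicity of $t$ is what the whole identification rests on. With that lemma written out, your argument is complete and correct.
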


\begin{theorem}
For the Average-of-Average cost, the distortion of any strategyproof mechanism is at least $3-\epsilon$, for any $\epsilon>0$.
\end{theorem}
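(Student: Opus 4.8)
The plan is to fix an arbitrary strategyproof distributed mechanism $M$ and a small constant $\epsilon>0$, and to exhibit a family of instances on which the Average-of-Average cost incurred by $M$ is at least $(3-\epsilon)$ times the optimum. The source of the factor $3$ in this max-variant two-facility setting is the tension between a \emph{tight} candidate pair and a \emph{spread} one. Concretely, I would work with a candidate multiset containing (scaled copies of) the points $0$, $1$, and $3$: an agent located at $1$ pays cost $c(1,\mathbf{w})=1$ under the tight pair $\mathbf{w}=\{0,1\}$ but cost $c(1,\mathbf{w})=\max\{\delta(1,0),\delta(1,3)\}=3$ under the spread pair $\{0,3\}$. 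This gadget yields the ratio $3$ at the level of a single location, and the whole construction is designed so that any facility profile $M$ can output behaves, for a large fraction of the agents, like the spread pair while the optimum behaves like the tight pair.

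The key step is to turn the per-location ratio into a forced global statement using the distributed structure. I would build two instances $I$ and $I'$ whose \emph{per-group reported profiles are identical} group by group, so that by property \textit{(P1)}--\textit{(P2)} the mechanism selects exactly the same representatives $\{y_{d(1)},y_{d(2)}\}_{d\in D}$ in both, and hence by property \textit{(P3)} outputs the \emph{same} facility profile $\mathbf{w}=(w_1,w_2)$ in both. The two instances would differ only in how the same multiset of agent locations is partitioned into groups (exploiting that $\text{AoA}$ averages within each group and then across groups), so that their optimal values $\text{OPT}(I)$ and $\text{OPT}(I')$ are realized by \emph{incompatible} candidate pairs — one optimum wants facilities clustered near the left cluster, the other near the right cluster. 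Since a single $\mathbf w$ cannot be simultaneously tight for both, I would show $\max\{\text{AoA}(M\mid I)/\text{OPT}(I),\ \text{AoA}(M\mid I')/\text{OPT}(I')\}\ge 3-\epsilon$, so that $\mathrm{dist}(M)\ge 3-\epsilon$. Strategyproofness enters to close the remaining loophole: if $M$ tried to avoid the bad profile by having some pivotal group nominate a more central representative, I would perturb that group by a single-agent deviation (making the agent the relevant quantile agent) and invoke strategyproofness to show the deviation cannot strictly help, so the mechanism cannot escape the forced output without violating the strategyproofness constraint established in Theorem~1's framework.

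In carrying this out I would first compute, for the chosen candidate multiset, the constrained two-facility max-variant optimum of each instance explicitly (verifying that the tight pair is feasible and optimal, of value $\Theta(\epsilon)$ or normalized to $1$, and that every pair compatible with the coinciding representatives has AoA value $\approx 3$ times larger), then carry out a short case analysis on which of the at most four representative locations Step~$2$ can combine into $(w_1,w_2)$, bounding the AoA from below in each case. The main obstacle I anticipate is the simultaneous design of the twin instances: the per-group profiles must coincide exactly (so that \textit{(P1)}--\textit{(P3)} pin down a common $\mathbf w$) while the groupings force the two optima onto genuinely incompatible clusters, and at the same time the candidate multiset must be sparse enough that no single feasible pair is good for both instances yet rich enough that each individual optimum really attains the small tight value. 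Balancing these three requirements — identical representatives, incompatible optima, and a constrained feasible set that realizes the clean $3$-to-$1$ gap — is the delicate part; once the instances are fixed, the remaining distortion estimates and the strategyproofness check for the pivotal deviation are routine.
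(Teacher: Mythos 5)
Your proposal is a plan rather than a proof, and the plan as stated contains an inconsistency that hides the one idea the argument actually needs. You say you will build two instances whose ``per-group reported profiles are identical group by group'' so that \textit{(P1)}--\textit{(P2)} force the same representatives, and simultaneously that the two instances ``differ only in how the same multiset of agent locations is partitioned into groups.'' These two requirements contradict each other: if the per-group profiles coincide group by group, the instances are the same instance; if the partition changes, the per-group profiles do not coincide and \textit{(P1)}--\textit{(P2)} give you nothing. The paper resolves this differently: the twin instances have genuinely \emph{different} agent locations, and it is \emph{strategyproofness} that forces the representatives to coincide. Concretely, with $A=\{0,0,1,1\}$, a single-agent group at $0$ must get representative $(0,0)$ (else infinite distortion), and then a single-agent group at $\tfrac12-\theta$ must \emph{also} get $(0,0)$, since otherwise that agent would profitably misreport as $0$; symmetrically, groups at $1$ and at $\tfrac12+\theta$ both get $(1,1)$. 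This ``snapping'' chain is what lets the instance $\{\tfrac12-\theta,\,1\}$ and the instance $\{\tfrac12+\theta,\,0\}$ present identical representative multisets $\{(0,0),(1,1)\}$ to Step~2, so \textit{(P3)} forces a common output that is near-optimal for one instance and a factor $\frac{3/2-\theta}{1/2+\theta}$ off for the other. In your sketch, strategyproofness appears only as a loophole-closer at the end; in fact it is the engine that manufactures the coinciding representatives, and without it the twin-instance step does not get off the ground.

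A second, smaller point: your source of the factor $3$ is the max-variant spread gadget (an agent at $1$ paying $3$ under the pair $\{0,3\}$ versus $1$ under $\{0,1\}$). The paper's construction does not use the two-facility structure at all in this way -- both facilities are co-located at $0$ or at $1$ in every relevant profile, and the factor $3$ comes from the classical half-point argument ($\frac{1/2+1}{1/2}=3$) applied across two groups. Your gadget might be made to work, but you have not constructed the instances, computed the constrained optima, or verified that the representatives can be forced to coincide under your candidate multiset, and you yourself flag exactly this balancing act as the unresolved ``delicate part.'' As it stands the proposal does not constitute a proof.
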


\begin{proof}
Assume for contradiction  that there exists a strategyproof mechanism $M$ that has a distortion strictly smaller than $3-\epsilon$, for some $\epsilon>0$.
 Let $\theta>0$ be an infinitesimal.  We will reach contradictions through considering the following several instances with the set of candidate locations $A=\left\{0,0,1,1\right\}$.

\textbf{Instance $I_1$}: Consider an instance $I_1$ with one agent in a single group, where $x_1^1=0$. In this case, the representative chosen by mechanism $M$ must be $(0,0)$, i.e., $M(I_1)=(0,0)$. Otherwise, the distortion of $M$ is infinite.

\textbf{Instance $I_2$}: Consider an instance $I_2$ with one agent in a single group, where $x_1^2=\frac{1}{2}-\theta$. If $M(I_2)\ne (0,0)$, then $c(x_1^2,M(I_2))=\frac{1}{2}+\theta$. Considering $c(x_1^2,M(I_1))=\frac{1}{2}-\theta$,  agent 1 at $x_1^2$ can decrease her cost by misreporting her location as $0$, in contradiction to strategyproofness. Thus, $M(I_2)=(0,0)$.

\textbf{Instance $I_3$}: Consider an instance $I_3$ with one agent in a single group, where $x_1^3=1$. Similar to Instance $I_1$, mechanism $M$ must output  $(1,1)$ as the representative of the group, i.e., $M(I_3)=(1,1)$.

\textbf{Instance $I_4$}: Consider an instance $I_4$ with one agent in a single group, where $x_1^4=\frac{1}{2}+\theta$. Similar to Instance $I_2$, $M(I_4)=(1,1)$.

\textbf{Instance $I_5$}: Consider an Instance $I_5$ with two groups:

\begin{itemize}  
    \item In group $1$, there is one agent at $\frac{1}{2}-\theta$.
    \item In group $2$, there is one agent at $1$.
\end{itemize} 

\noindent \noindent Considering $M(I_2)=(0,0)$ and $M(I_3)=(1,1)$, by \textit{(P1)} and \textit{(P2)}, the representatives of group $1$ and group $2$ in Instance $I_5$ are \( (0,0) \) and \( (1,1) \), respectively.  Since $AoA(0,0)=\frac{3}{4}-\frac{\theta}{2}$, $AoA(0,1)=AoA(1,0)=\frac{3}{4}+\frac{\theta}{2}$ and $AoA(1,1)=\frac{1}{4}+\frac{\theta}{2}$, $M$ must output $(1,1)$ as the overall facility location profile, i.e., $M(I_5)=(1,1)$. Otherwise, the distortion of $M$ is at least $\frac{\frac{3}{2}-\theta}{\frac{1}{2}+\theta}\ge3-\epsilon$, for $\theta\le\frac{\epsilon}{8-2\epsilon}$.

 Now, we can reach a contradiction by considering the following instance $I_6$ with two groups:

\begin{itemize}  
    \item In group $1$, there is one agent at $\frac{1}{2}+\theta$.  
    \item In group $2$, there is one agent at $0$.
\end{itemize} 

\noindent Considering $M(I_4)=(1,1)$ and $M(I_1)=(0,0)$, by \textit{(P1)} and \textit{(P2)}, the representatives of group $1$ and group $2$ in Instance $I_6$  are \( (1,1) \) and \( (0,0) \), respectively. Since the group representatives in Instance $I_5$ and Instance $I_6$ are identical,  according to \textit{(P3)}, we have that $M(I_6)=(1,1)$.
Since $AoA(0,0)=\frac{1}{4}+\frac{\theta}{2}$, $AoA(0,1)=AoA(1,0)=\frac{3}{4}+\frac{\theta}{2}$ and $AoA(1,1)=\frac{3}{4}-\frac{\theta}{2}$, the distortion of $M$ is at least $\frac{\frac{3}{2}-\theta}{\frac{1}{2}+\theta}\ge3-\epsilon$, for $\theta\le\frac{\epsilon}{8-2\epsilon}$. This contradicts the assumption that M has a distortion strictly smaller than $3-\epsilon$.
\qed
\end{proof}

\begin{remark}
  In the  facility location problem,  the most natural social objective is the social cost. In fact, our results under the average-of-average cost can be easily generalized to  the social cost.
\end{remark}

\section{Max-of-Max cost}

In this section, we study the Max-of-Max cost objective. For the upper bound, we consider the $(1,1)$-Quantile mechanism,\footnote{In fact, whatever values $\alpha$ and $\beta$ take, the $(\alpha,\beta)$-Quantile mechanism would achieve a distortion of at most $3$ under the max-of-max cost.} which achieves a distortion of at most $3$.
~\\

\noindent \textbf{$\bm{(1,1)}$-Quantile Mechanism}  
\begin{itemize}
    \item[$\bullet$] Step 1. For each group $d\in D$ , let $y_{d(1)}=t({r_d})$, $y _{d(2)}=s({r_d})$.
    \item[$\bullet$] Step 2. Set $z_d:=y_{d(1)}$ for each group  $d\in D$ and return $w_1:=$ the rightmost location in $\{z_d\}_{d\in D}$. For each group $d\in D$, update $z_d$ as $y_{d(2)}$ if $y_{d(1)}=w_1$, and return $w_2:=$ the rightmost location in $\{z_d\}_{d\in D}$.
\end{itemize} 
\begin{theorem}
For the Max-of-Max cost, the distortion of the $(1,1)$-Quantile mechanism is at most 3. 
\end{theorem}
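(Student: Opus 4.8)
The plan is to reduce the Max-of-Max objective to a simple expression involving only the two extreme agents, and then to show that the $(1,1)$-Quantile mechanism places both facilities within $\mathrm{OPT}$ of the rightmost agent. Writing $w_{\min}=\min\{w_1,w_2\}$, $w_{\max}=\max\{w_1,w_2\}$ and recalling $c(x_i,\mathbf{w})=\max\{|x_i-w_1|,|x_i-w_2|\}$, the first observation is that for every profile $\mathbf{w}$
\[
\mathrm{MoM}(\mathbf{w}\mid I)=\max_{i\in N}\max_{j\in\{1,2\}}|x_i-w_j|=\max\{\,w_{\max}-x_l,\ x_r-w_{\min}\,\},
\]
since the agent farthest from any fixed facility is always the leftmost agent $l$ or the rightmost agent $r$. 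Applying the same identity to the optimal profile $\mathbf{o}=(o_1,o_2)$ with $o_{\min}=\min\{o_1,o_2\}$, $o_{\max}=\max\{o_1,o_2\}$, and setting $\mathrm{OPT}:=\mathrm{MoM}(\mathbf{o}\mid I)$ and $R:=x_r-x_l$, one gets $\mathrm{OPT}=\max\{o_{\max}-x_l,\,x_r-o_{\min}\}\ge\tfrac12\big((o_{\max}-x_l)+(x_r-o_{\min})\big)\ge R/2$, a lower bound I will use at the end.

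The key structural step is to identify the output as the two candidates nearest to the global rightmost agent, namely $w_1=t(r)$ and $w_2=s(r)$. For $w_1$, the nearest-candidate map $t$ is nondecreasing, and since $x_r=\max_d x_{r_d}$ we have $t(r_d)\le t(r)$ for every group $d$ while the group containing $r$ attains $t(r)$; hence Step~2 returns $w_1=\max_d t(r_d)=t(r)$. For $w_2$ I would show that the largest updated value $z_d$ is contributed by the group containing $r$: every group with $t(r_d)=w_1$ has its agent $r_d$ lying in the set of points whose nearest candidate is $t(r)$, on which the second-nearest-candidate map $s$ is nondecreasing, so $s(r_d)\le s(r)$; and every group with $t(r_d)<w_1$ keeps $z_d=t(r_d)\le s(r)$. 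Thus $w_2=\max_d z_d=s(r)$. I expect this monotonicity bookkeeping—particularly pinning $w_2$ down to $s(r)$ rather than merely to the second representative of some group—to be the main obstacle.

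Finally I would bound the two terms of the $\mathrm{MoM}$ formula. From $o_{\max}-x_l\le\mathrm{OPT}$ and $x_r-o_{\min}\le\mathrm{OPT}$, a short case check on the positions of $o_{\min}$ and $o_{\max}$ relative to $x_r$ shows $|x_r-o_{\min}|\le\mathrm{OPT}$ and $|x_r-o_{\max}|\le\mathrm{OPT}$, so both optimal candidates lie within $\mathrm{OPT}$ of $r$. Because $w_1=t(r)$ is the candidate nearest to $r$, we get $|x_r-w_1|\le\min\{|x_r-o_{\min}|,|x_r-o_{\max}|\}\le\mathrm{OPT}$; because $w_2=s(r)$ is the second-nearest candidate, comparing it to whichever of $o_{\min},o_{\max}$ is distinct from $w_1$ gives $|x_r-w_2|\le\mathrm{OPT}$. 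Hence $w_{\min},w_{\max}\in[x_r-\mathrm{OPT},\,x_r+\mathrm{OPT}]$, which yields $x_r-w_{\min}\le\mathrm{OPT}$ and, invoking $\mathrm{OPT}\ge R/2$, $w_{\max}-x_l\le R+\mathrm{OPT}\le 3\,\mathrm{OPT}$. Combining the two, $\mathrm{MoM}(\mathbf{w}\mid I)=\max\{w_{\max}-x_l,\,x_r-w_{\min}\}\le 3\,\mathrm{OPT}$, which is the claimed bound.
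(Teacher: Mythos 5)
Your proof is correct, and it reaches the bound by a genuinely different route than the paper. The paper's argument is a short metric chain centred on the worst-off agent $i^*$ and the rightmost agent $r$: it bounds $\delta(x_{i^*},w(x_{i^*}))\le\delta(x_{i^*},o(x_{i^*}))+\delta(x_r,o(x_r))+\delta(x_r,w(x_r))$ by the triangle inequality, uses the single geometric fact that $w_1=t(r)$ and $w_2=s(r)$ are the two nearest candidates to $r$ to get $\delta(x_r,w(x_r))\le\delta(x_r,o(x_r))$, and observes that each resulting term is at most $\mathrm{MoM}(\mathbf{o})$. You instead exploit the line structure explicitly: the identity $\mathrm{MoM}(\mathbf{w})=\max\{w_{\max}-x_l,\ x_r-w_{\min}\}$, the lower bound $\mathrm{OPT}\ge (x_r-x_l)/2$, and the localization of both facilities in $[x_r-\mathrm{OPT},\,x_r+\mathrm{OPT}]$. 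Both proofs ultimately rest on the same key property of the mechanism (its output is the pair of candidates nearest to $r$, each therefore no farther from $x_r$ than the corresponding optimal candidate), and your case checks for the $\mathrm{MoM}$ identity and for $|x_r-w_2|\le\mathrm{OPT}$ all go through, including the multiset point that at least one of $o_1,o_2$ lies in $A\setminus\{t(r)\}$. What your version buys is transparency about where the factor $3$ arises ($x_r-w_{\min}\le\mathrm{OPT}$ versus $w_{\max}-x_l\le R+\mathrm{OPT}\le 3\,\mathrm{OPT}$) and an explicit monotonicity argument for $w_1=t(r)$, $w_2=s(r)$, which the paper only asserts via its Remark 2. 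What it costs is generality: the reduction to the extremal agents $l,r$ and the interval picture are specific to the real line, whereas the paper's triangle-inequality chain would survive essentially unchanged in a more general metric setting.
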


\begin{proof}
	Given any instance $I$, let $\mathbf{w} = (w_1,w_2)$ be the location profile chosen by the mechanism and $\mathbf{o}= (o_1,o_2)$ be an optimal solution. According to the property of this mechanism, we have that $w_1 = t(r)$ and $w_2 = s(r)$. Denote by $i^*$ the agent  such that $MoM(\mathbf{w}) = \max\{\delta(x_{i^*}, w_1), \delta(x_{i^*}, w_2)\} = \delta(x_{i^*}, w(x_{i^*}))$. Since $w_1=t(r)$ and $w_2=s(r)$ are the closest candidate locations to $r$ and using the triangle inequality, we have that  
	\begin{equation}
		\begin{aligned}
			MoM(\mathbf{w}) &= \delta(x_{i^*}, w(x_{i^*})) \\
			&\leq \delta(x_{i^*}, o_1) + \delta(o_1, x_r) + \delta(x_r, w(x_{i^*}))\\
			&\le \delta(x_{i^*}, o(x_{i^*})) + \delta(x_r, o(x_r)) + \delta(x_r, w(x_r)) \\
			&\le \delta(x_{i^*}, o(x_{i^*})) + 2 \cdot \delta(x_r, o(x_r)).   
		\end{aligned}
	\end{equation}
	
	\noindent For the optimal solution $\mathbf{o}$, we have 
	$MoM(\mathbf{o}) \geq \delta(x_i, o(x_i))$, for any agent $i\in N$.  
	Therefore, we obtain
	\begin{equation}
		\begin{aligned}
			MoM(\mathbf{w})&\leq \delta(x_{i^*},o(x_{i^*}))+2\cdot \delta(x_r,o(x_r))\leq3\cdot MoM(\mathbf{o}).
		\end{aligned}
	\end{equation}
    \qed
\end{proof}
 Next, we show a lower bound on the distortion of all strategyproof mechanisms.

\begin{theorem}
For the Max-of-Max cost, the distortion of any strategyproof mechanism is at least $3-\varepsilon$, for any $\varepsilon>0$.
\end{theorem}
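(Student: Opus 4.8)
The plan is to mirror the contradiction framework already used for the Average-of-Average lower bound: assume a strategyproof distributed mechanism $M$ whose distortion is strictly below $3-\varepsilon$, fix an infinitesimal $\theta>0$, work with the candidate set $A=\{0,0,1,1\}$, and first pin down the behaviour of $M$ on a family of one-agent, one-group instances before combining them into two carefully matched two-group instances that trigger the contradiction via properties \textit{(P1)}--\textit{(P3)} and Remark 2 (so that the output is always one group's representative pair).

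First I would fix the representative pairs forced on single-agent groups. For an agent placed at an extreme relative to the candidates, say at $-\tfrac12$ or at $\tfrac32$, only the clustered pair $(0,0)$ (resp.\ $(1,1)$) keeps the max-variant cost within a factor $3-\varepsilon$ of the optimum, since $(1,1)$ and $(0,1)$ each cost $\tfrac32$ against an optimum of $\tfrac12$; hence finiteness of the distortion alone forces the representative. For an agent placed just off the midpoint, at $\tfrac12-\theta$ or $\tfrac12+\theta$, the distortion argument is too weak (every pair is within a factor close to $1$), so here I would invoke strategyproofness instead, exactly as in the Average-of-Average argument: comparing the truthful report against a deviation toward the nearer endpoint, whose forced output is already known to be $(0,0)$ or $(1,1)$, shows that the truthful farthest-facility cost must be at most $\tfrac12-\theta$, which only the clustered pair on that side achieves.

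Next I would assemble two two-group instances $I_5$ and $I_6$ whose group representatives form the same multiset $\{(0,0),(1,1)\}$, so that by \textit{(P1)}--\textit{(P3)} the mechanism returns an identical facility pair on both. The instances are built as mirror images about $\tfrac12$: in $I_5$ I would put the extreme agent at $-\tfrac12$ (forcing $(0,0)$) in one group and the agent at $\tfrac12+\theta$ (forcing $(1,1)$) in the other, so that $(0,0)$ is near-optimal with cost $\tfrac12+\theta$ while both $(1,1)$ and the spread pair $(0,1)$ incur Max-of-Max cost $\tfrac32$; in $I_6$ I would reflect the configuration (agents at $\tfrac32$ and $\tfrac12-\theta$) so that $(1,1)$ is near-optimal and $(0,0)$ costs $\tfrac32$. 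Since $M$ must output the same pair on both instances while each clustered pair is catastrophic in exactly one of them (and the spread pair in both), the realized distortion is at least $\tfrac{3/2}{1/2+\theta}\ge 3-\varepsilon$ for $\theta$ small enough, contradicting the standing assumption.

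The main obstacle I anticipate is specific to Max-of-Max: because the objective is a maximum rather than an average, the fully symmetric two-agent construction of the Average-of-Average type yields only a ratio of $2$ (the span divided by the half-span), since a central agent keeps the ``good'' cost near $\tfrac12$. Reaching $3$ requires the asymmetry described above---a near-central agent (together with the near-central candidate) to hold the optimum down to $\tfrac12+\theta$, paired with an agent at distance $\tfrac12$ from its near endpoint whose distance to the far endpoint is $\tfrac32$---and one must additionally verify in each twin instance that the spread pair $(0,1)$ offers the mechanism no escape. Getting these cost comparisons and the midpoint strategyproofness forcing exactly right, so that the infinitesimal $\theta$ and the threshold $\varepsilon$ line up, is the delicate part of the argument.
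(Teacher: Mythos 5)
Your proposal is correct, but it follows a genuinely different construction from the paper's. The paper works with three distinct candidate locations, $A=\{0,0,2,2,4,4\}$, uses two-agent single-group instances (e.g.\ agents at $-1$ and $1-\theta$) to force the representatives $(0,0)$ and $(4,4)$ via the distortion bound and a strategyproofness step, and then needs only \emph{one} final two-group instance: since the optimal pair $(2,2)$ is never a representative of either group, every output available to the mechanism costs $3+\theta$ against an optimum of $1+\theta$, so properties \textit{(P1)} and \textit{(P2)} suffice and \textit{(P3)} is not invoked at the final step. Your route instead keeps the two-location candidate set $\{0,0,1,1\}$ and transplants the twin-instance symmetry argument from the Average-of-Average lower bound, which forces you to additionally rely on \textit{(P3)} (with ``identical representatives'' read as a multiset, exactly as the paper itself does in its Theorem~3 proof). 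The delicate point you flag is real and you resolve it correctly: the naive mirror of the AoA construction (extreme agent \emph{at} a candidate) only yields a ratio of $2$ under Max-of-Max, and placing the extreme agent at $-\tfrac12$ (resp.\ $\tfrac32$) so that its far-facility cost is $\tfrac32$ while the near-central agent pins the optimum at $\tfrac12+\theta$ is what recovers the factor $3$; I checked that all pairs $(0,0),(0,1),(1,1)$ behave as you claim in both twin instances. The trade-off between the two proofs: yours uses a smaller candidate set but a stronger structural assumption (\textit{(P3)}) on the mechanism; the paper's needs an extra candidate location but rules out every feasible output in a single instance, so it is slightly more robust.
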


\begin{proof}
Assume for contradiction that there exists a strategyproof mechanism $M$
that has a distortion strictly smaller than $3-\varepsilon$, for some $\varepsilon>0$.   Let $\theta>0$ be an infinitesimal.  We will reach a contradiction through considering the following several instances with the set of candidate locations $A=\left\{0,0,2,2,4,4\right\}$.

\textbf{Instance $I_1$}: Consider an instance $I_1$ with two agents in a single group, with the location profile $\mathbf{x^1}$,  where $x_1^1=-1$ and $x_2^1=1-\theta$. We can obtain $MoM (0,0) = 1, MoM (0,2) = MoM (2,0) = MoM (2,2) = 3, MoM (4,4) = MoM (0,4) = MoM (4,0) = MoM (2,4) = MoM (4,2)=5 $. Clearly, $OPT(I_1)=(0,0)$, so we claim that $M$ must choose $(0,0)$ as the representative of this group, i.e., $M(I_1)=(0,0)$. Otherwise, the distortion of $M$ is at least $3$, which contradicts the assumption.

\textbf{Instance $I_2$}: Consider an instance $I_2$ with two agents in a single group, with the location profile $\mathbf{x^2}$, where $x_1^2=1-\theta$ and $x_2^2=1-\theta$. If $M(I_2)\ne (0,0)$, then $c(x_1^2,M(I_2))\ge 1+\theta$. Considering $c(x_1^2,M(I_1))=1-\theta$,
  agent $1$ at $x_1^2$ can decrease her cost by misreporting her location as $-1$, in contradiction to strategyproofness. Thus, $M(I_2)=(0,0)$.

\textbf{Instance $I_3$}: Consider an instance $I_3$  with two agents in a single group, with the location profile $\mathbf{x^3}$, where  $x_1^3=5$ and $x_2^3=3+\theta$. Since Instance $I_3$ and Instance $I_1$ have symmetry with respect to $A$, we have that $M(I_3)=(4,4)$.

\textbf{Instance $I_4$}: Consider an instance $I_4$ with two agents in a single group, with the location profile $\mathbf{x^4}$, where $x_1^4=3+\theta$ and $x_2^4=3+\theta$. Similar to Instance $I_2$, we have that  $M(I_4)=(4,4)$.

 Finally, to reach a contradiction, consider the following instance $I_5$ with two groups:

\begin{itemize}  
    \item In group $1$, there are two agents at \( 1 - \theta \).  
    \item In group $2$, there are two agents at \( 3 + \theta \).  
\end{itemize}  
\noindent Considering $M(I_2)=(0,0)$ and $M(I_4)=(4,4)$, by \textit{(P1)} and \textit{(P2)}, the representatives of group $1$ and group $2$ 
 are \( (0,0) \) and \( (4,4) \), respectively. Therefore, \( M(I_5) = (0,0), (0,4), (4,0) \) or \( (4,4) \) and  $MoM(M|I_5) = 3 + \theta$.
However, \( OPT(I_5) = (2,2) \) and   $MoM(OPT|I_5) = 1 + \theta$.

Thus, the distortion of mechanism \( M \) is at least   
$\frac{3 + \theta}{1 + \theta} \geq 3 - \varepsilon$,  for $\theta\le \frac{\epsilon}{2-\epsilon}$; which is a contradiction.  
\qed
\end{proof}

\section{Max-of-Average cost}

In this section, we focus on the Max-of-Average cost objective, which is the maximum of the average individual cost of all agents in each group.
For the upper bound, we consider the  $(\frac{3-\sqrt{5}}{2},1)$-Quantile mechanism, which achieves a distortion of at most $2+\sqrt{5}$. For the lower bound, the distortion of any strategyproof mechanism is at least $\frac{7}{2}-\epsilon$, for any $\epsilon>0$.

 We first consider  the $(\alpha,1)$-Quantile mechanism and analyze its performance.
~\\

\noindent \textbf{$\bm{(\alpha,1)}$-Quantile Mechanism} 

\begin{itemize}
    \item[$\bullet$] Step 1. For each group $d\in D$ , denote by $\alpha_d$ the $\lceil\alpha\cdot n_d\rceil$-th leftmost agent in $N_d$.  Let $y_{d(1)}=t({\alpha_d})$, $y _{d(2)}=s({\alpha_d})$.
    \item[$\bullet$]  Step 2. Set $z_d:=y_{d(1)}$ for each group  $d\in D$ and return $w_1:=$ the rightmost location in $\{z_d\}_{d\in D}$. For each group $d\in D$, update $z_d$ as $y_{d(2)}$ if $y_{d(1)}=w_1$, and return $w_2:=$ the rightmost location in $\{z_d\}_{d\in D}$.
\end{itemize} 
 
\begin{lemma}
    For the Max-of-Average cost, the distortion of the $(\alpha,1)$-Quantile mechanism is at most $\max \left\{ 1+\frac{2(1-\alpha)}{\alpha}, 1 + \frac{2}{1 - \alpha}\right\}$.
\end{lemma}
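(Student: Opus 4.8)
The plan is to reduce everything to a single ``bottleneck'' group and to exploit the fact that, on the line, the max-variant cost of two facilities $\mathbf w=(w_1,w_2)$ has the closed form $c(x,\mathbf w)=|x-\mu_w|+\rho_w$, where $\mu_w=\frac{w_1+w_2}{2}$ is the midpoint and $\rho_w=\frac{\delta(w_1,w_2)}{2}$ is half the separation (and likewise $\mu_o,\rho_o$ for an optimal profile $\mathbf o=(o_1,o_2)$). By the structure of the $(\alpha,1)$-Quantile mechanism (the output facilities are the two representatives of some group), the mechanism outputs the two adjacent candidates $w_1=t(x^\ast)$ and $w_2=s(x^\ast)$ closest to the $\alpha$-quantile $x^\ast=x_{\alpha_{d^\ast}}$ of the group $d^\ast$ whose representative $t(\alpha_{d^\ast})$ is rightmost (as $\beta=1$). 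Because $w_1,w_2$ are the two closest candidates to $x^\ast$, the farther of them from $x^\ast$ is no farther than the farther of any candidate pair, giving the single clean inequality $(\star)$: $c(x^\ast,\mathbf w)\le c(x^\ast,\mathbf o)$, i.e.\ $|x^\ast-\mu_w|+\rho_w\le|x^\ast-\mu_o|+\rho_o$. I would let $g$ be the group attaining $\mathrm{MoA}(\mathbf w)$ and work only with the average over $N_g$.

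First I would establish the uniform reduction $\mathrm{MoA}(\mathbf w)\le\mathrm{MoA}(\mathbf o)+2\,|x^\ast-\mu_o|$. Averaging $|x_i-\mu_w|\le|x_i-\mu_o|+|\mu_w-\mu_o|$ over $N_g$ and adding $\rho_w$ gives $\mathrm{MoA}(\mathbf w)\le\frac1{n_g}\sum_{i\in N_g}|x_i-\mu_o|+|\mu_w-\mu_o|+\rho_w$. Regrouping as $\bigl(\frac1{n_g}\sum_{i\in N_g}|x_i-\mu_o|+\rho_o\bigr)+\bigl(|\mu_w-\mu_o|+\rho_w-\rho_o\bigr)$, the first bracket equals $\frac1{n_g}\sum_{i\in N_g}c(x_i,\mathbf o)\le\mathrm{MoA}(\mathbf o)$, while the second is at most $2|x^\ast-\mu_o|$ since $|\mu_w-\mu_o|+\rho_w\le|\mu_w-x^\ast|+\rho_w+|x^\ast-\mu_o|\le 2|x^\ast-\mu_o|+\rho_o$ by $(\star)$. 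Keeping everything as averages over $N_g$ is essential, so that no single agent's cost is ever charged against an average objective.

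It then remains to bound $|x^\ast-\mu_o|$ through the quantile split of $d^\ast$: at least $\alpha\,n_{d^\ast}$ of its agents lie at or to the left of $x^\ast$ and at least $(1-\alpha)\,n_{d^\ast}$ at or to the right. If $\mu_o\le x^\ast$, the right block forces $\mathrm{MoA}(\mathbf o)\ge(1-\alpha)(x^\ast-\mu_o)$, and the reduction yields $\mathrm{MoA}(\mathbf w)\le\bigl(1+\tfrac{2}{1-\alpha}\bigr)\mathrm{MoA}(\mathbf o)$. If $\mu_o\ge x^\ast$, the left block only gives $|x^\ast-\mu_o|\le\mathrm{MoA}(\mathbf o)/\alpha$ and hence the too-weak $1+\tfrac2\alpha$; to reach $1+\tfrac{2(1-\alpha)}{\alpha}$ I would sharpen the averaging. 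Using $(\star)$ with $\mu_o\ge x^\ast$, I would show the right-hand optimal facility lies weakly to the right of the right-hand mechanism facility, i.e.\ $\mu_o+\rho_o\ge\mu_w+\rho_w$, so that every agent of $g$ at or to the left of its own quantile $x_{\alpha_g}$ already satisfies $c(x_i,\mathbf w)\le c(x_i,\mathbf o)$ and contributes nothing to the surplus $\mathrm{MoA}(\mathbf w)-\mathrm{MoA}(\mathbf o)$. Since at most $(1-\alpha)\,n_g$ agents lie to the right of $x_{\alpha_g}$ and each incurs a per-agent surplus of at most $2|x^\ast-\mu_o|$, the surplus is at most $(1-\alpha)\cdot 2|x^\ast-\mu_o|\le\tfrac{2(1-\alpha)}{\alpha}\mathrm{MoA}(\mathbf o)$. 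The worse of the two cases is the claimed $\max$.

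I expect the main obstacle to be precisely this refinement when $\mu_o\ge x^\ast$. The two quantiles $x_{\alpha_g}$ and $x^\ast$ belong to different groups and are linked only through the $\beta=1$ rightmost rule, which guarantees $t(x_{\alpha_g})\le w_1$; converting this into the clean statement that the whole left block of $N_g$ is served at least as well by $\mathbf w$ as by $\mathbf o$ requires care. In particular one must handle the sub-cases of where $x^\ast$ sits relative to the pair $\{w_1,w_2\}$ (note that $x^\ast$ need not lie between $w_1$ and $w_2$) and verify the facility comparison $\mu_o+\rho_o\ge\mu_w+\rho_w$ uniformly over them. Getting this bookkeeping right is exactly what replaces the factor $1$ by $(1-\alpha)$ and produces the exact constant $1+\tfrac{2(1-\alpha)}{\alpha}$ rather than the crude $1+\tfrac2\alpha$.
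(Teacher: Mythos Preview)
Your approach is sound and takes a genuinely different route from the paper. The paper argues by a six-case analysis on the relative order of $w_1\le w_2$ and $o_1\le o_2$: it obtains $1+\tfrac{2}{1-\alpha}$ when $\mathbf o$ sits entirely to the left of $\mathbf w$ (Cases~1--2), $1+\tfrac{2(1-\alpha)}{\alpha}$ when it sits to the right (Cases~3--4), and the harmless factor $3$ in the interleaved Cases~5--6. Your midpoint--radius reformulation $c(x,\mathbf w)=|x-\mu_w|+\rho_w$ together with the single local-optimality inequality $(\star)$ collapses this to the two cases $\mu_o\lessgtr x^\ast$, and your uniform reduction $\mathrm{MoA}(\mathbf w)\le\mathrm{MoA}(\mathbf o)+2|x^\ast-\mu_o|$ replaces all of the paper's triangle-inequality bookkeeping in one stroke. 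The case $\mu_o\le x^\ast$ is complete as you wrote it. One further structural difference: in its Case~3 the paper lower-bounds $\mathrm{MoA}(\mathbf o)$ via the left $\alpha$-block of the \emph{bottleneck} group $d'$ (your $g$) rather than of $d^\ast$; either choice works with the same constant.

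One correction on where the difficulty actually sits in your case $\mu_o\ge x^\ast$. The facility comparison $\mu_o+\rho_o\ge\mu_w+\rho_w$ does \emph{not} need any sub-case analysis over the position of $x^\ast$: from $\mu_w+\rho_w-x^\ast\le|x^\ast-\mu_w|+\rho_w\le(\mu_o-x^\ast)+\rho_o$ it drops out of $(\star)$ in one line. What does \emph{not} follow from $\max(\mathbf o)\ge\max(\mathbf w)$ alone is your zero-surplus claim for the left block of $g$: that implication holds only for agents with $x_i\le\mu_w$ (so that $c(x_i,\mathbf w)=\max(\mathbf w)-x_i$), and you have not argued that $x_{\alpha_g}\le\mu_w$. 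This is precisely where the $\beta=1$ link $t(x_{\alpha_g})\le w_1$ must be cashed in---not for the facility comparison, but to pin down $x_{\alpha_g}$ relative to $\mu_w$. The paper's Case~3 does exactly this step: from $t(x_{\alpha_{d'}})\le w_1$ and the fact that $w_1,w_2$ are candidates, one gets that every agent in the left $\alpha$-block of the bottleneck group lies weakly left of $\mu_w$, whence $c(x_i,\mathbf w)=\max(\mathbf w)-x_i\le\max(\mathbf o)-x_i\le c(x_i,\mathbf o)$. Once you insert that one step (which is short), your two-case argument closes and yields the stated bound by a shorter route than the paper's.
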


By solving the equation $1 + \frac{2(1 - \alpha)}{\alpha} = 1 + \frac{2}{1 - \alpha}$, we can obtain the following theorem.

\begin{theorem}
For the Max-of-Average cost, the distortion of the $(\frac{3 - \sqrt{5}}{2},1)$-Quantile mechanism is at most \( 2 + \sqrt{5} \).  
\end{theorem}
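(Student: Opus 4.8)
The plan is to obtain this theorem as an immediate consequence of the preceding lemma by choosing $\alpha$ to minimize the stated upper bound. Write the bound from the lemma as $f(\alpha)=\max\{g_1(\alpha),g_2(\alpha)\}$, where $g_1(\alpha)=1+\frac{2(1-\alpha)}{\alpha}$ and $g_2(\alpha)=1+\frac{2}{1-\alpha}$, both regarded as functions on the interval $(0,1)$. Since the lemma holds for every admissible $\alpha$, it suffices to exhibit a single value of $\alpha$ for which $f(\alpha)\le 2+\sqrt{5}$, and $\alpha=\frac{3-\sqrt{5}}{2}$ will be that value.

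First I would record the monotonicity of the two branches. The quantity $\frac{1-\alpha}{\alpha}=\frac{1}{\alpha}-1$ is strictly decreasing in $\alpha$, so $g_1$ is strictly decreasing on $(0,1)$; likewise $\frac{1}{1-\alpha}$ is strictly increasing, so $g_2$ is strictly increasing on $(0,1)$. Because the two branches move in opposite directions, the pointwise maximum $f$ is minimized exactly at the unique $\alpha$ where $g_1(\alpha)=g_2(\alpha)$: to the left of the crossing $f=g_1$ is still falling, and to the right $f=g_2$ is rising. This reduces the problem to solving the single equation flagged in the text.

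Next I would solve $g_1(\alpha)=g_2(\alpha)$. Cancelling the common $1$ and the factor $2$, this becomes $\frac{1-\alpha}{\alpha}=\frac{1}{1-\alpha}$, i.e.\ $(1-\alpha)^2=\alpha$, which rearranges to the quadratic $\alpha^2-3\alpha+1=0$ with roots $\frac{3\pm\sqrt{5}}{2}$. Only $\alpha=\frac{3-\sqrt{5}}{2}\approx 0.382$ lies in $(0,1)$, so this is the optimizing parameter. Substituting $1-\alpha=\frac{\sqrt{5}-1}{2}$ into $g_2$ gives $g_2(\alpha)=1+\frac{4}{\sqrt{5}-1}=1+(\sqrt{5}+1)=2+\sqrt{5}$ after rationalizing, and the same value comes out of $g_1$, confirming that the crossing value equals $2+\sqrt{5}$.

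There is no serious obstacle here; the substantive content lives entirely in the earlier lemma, and the only points requiring care are justifying that the minimax of the two branches is attained at their intersection (handled by the opposite-monotonicity argument above) and checking that the relevant root of the quadratic actually lies in $(0,1)$, which discards the spurious root $\frac{3+\sqrt{5}}{2}>1$. Once these are in place, the lemma instantiated at $\alpha=\frac{3-\sqrt{5}}{2}$ yields distortion at most $2+\sqrt{5}$, completing the proof.
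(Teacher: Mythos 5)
Your proposal is correct and matches the paper's approach exactly: the paper also derives this theorem directly from the preceding lemma by solving $1+\frac{2(1-\alpha)}{\alpha}=1+\frac{2}{1-\alpha}$, which gives $\alpha=\frac{3-\sqrt{5}}{2}$ and the common value $2+\sqrt{5}$. Your additional remarks on the opposite monotonicity of the two branches and on discarding the root $\frac{3+\sqrt{5}}{2}\notin(0,1)$ are correct justifications of steps the paper leaves implicit.
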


\begin{theorem}
For the Max-of-Average cost, the distortion of any strategyproof mechanism is at least $\frac{7}{2}-\varepsilon$, for any $\varepsilon>0$.
\end{theorem}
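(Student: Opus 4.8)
The plan is to argue by contradiction, following the instance-chaining template of the Average-of-Average and Max-of-Max lower bounds, but engineering the intra-group averages so the ratio climbs from $3$ to $7/2$. I would assume some strategyproof mechanism $M$ has distortion strictly below $7/2-\varepsilon$, fix a candidate multiset $A$ together with an infinitesimal $\theta>0$, and build a finite family of instances. As in those proofs, everything rests on three devices: the distortion bound (which forces $M$'s hand whenever one output is near-optimal and all others are at least $7/2-\varepsilon$ times worse), strategyproofness (which lets an agent in one instance mimic her twin in a neighbouring instance, thereby propagating a forced representative across configurations), and properties \textit{(P1)}--\textit{(P3)} (which let us recombine groups freely and transport a forced output to any instance with the same collection of representatives).

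First I would run a calibration phase on single-group instances to pin down the representatives $M$ must assign. Starting from degenerate groups whose optimum is uniquely one candidate pair — so that any other choice already incurs infinite or $\ge 7/2$ distortion — I would slide the agents toward threshold positions and invoke strategyproofness at each step to show that the representative cannot change without some agent profitably deviating. The novelty relative to the Max-of-Max proof is that each group now carries several agents, so the quantity being controlled is a within-group average rather than a single agent's distance; this is precisely what will let the aggregate objective eventually exceed $3$.

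Next I would assemble two multi-group instances $I^{\star}$ and $I^{\star\star}$ that share the same multiset of group representatives — guaranteed by \textit{(P1)} and \textit{(P2)} from the calibration — so that by \textit{(P3)} the mechanism outputs the same profile $\mathbf{w}$ on both. The positions would be chosen so that on $I^{\star}$ the profile $\mathbf{w}$ is essentially forced, because every other admissible output already costs at least $7/2-\varepsilon$ times the optimum there; and so that on $I^{\star\star}$ the optimum is attained by a different profile while $\mathbf{w}$ is far from it, giving $\mathrm{MoA}(\mathbf{w}\mid I^{\star\star})\ge (7/2-\varepsilon)\,\mathrm{MoA}(\mathrm{OPT}\mid I^{\star\star})$ and the desired contradiction.

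The main obstacle — and the reason the construction must depart from the Average-of-Average case — is that MoA is a maximum over groups of within-group averages, so a single expensive group dominates and cannot be diluted at the top level. Hence the contradiction cannot come from merely swapping a cheap single-agent group for an expensive one, since with one agent per group MoA coincides with Max-of-Max and that route yields only $3$. Instead one must build a single pivotal group whose within-group average inflates by exactly the factor $7/2$ between the optimal and the forced profile, while simultaneously keeping every group cheap under the optimal profile (so that $\mathrm{MoA}(\mathrm{OPT})$ stays small) and keeping the pivotal group's forced representative compatible with strategyproofness even though that representative is sub-optimal for the group in isolation. Balancing these requirements — choosing the intra-group positions that set the averaging weights, and exploiting that under the max-variant cost any output placing the two facilities at distinct candidate locations charges every agent at least half the separation of the facilities and is therefore never a better optimum than clustering both — is where the precise constant $7/2$ is produced, and I expect the bookkeeping that simultaneously respects strategyproofness, the matching-representative requirement, and the exact target ratio to be the hardest part.
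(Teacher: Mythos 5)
Your high-level strategy matches the paper's: a chain of single-group instances calibrated via the distortion bound and strategyproofness, followed by a multi-group instance whose representatives are fixed by \textit{(P1)}--\textit{(P2)}, with multi-agent groups used precisely because single-agent groups collapse MoA to Max-of-Max and cap the bound at $3$. Your observation that the optimum always co-locates both facilities (since splitting them charges every agent at least half their separation) is also correct and is implicitly used throughout the paper's argument. However, the proposal stops exactly where the proof begins: in a lower-bound argument of this kind, the entire mathematical content is the explicit construction, and you never exhibit the candidate multiset, the group compositions, or the agent positions that realize the constant $\tfrac{7}{2}$. You acknowledge that ``the bookkeeping \ldots is the hardest part,'' but that bookkeeping is not routine --- the constant is produced by a specific $4$:$1$ split of agents inside each group, and without it there is no proof to check.

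For comparison, the paper takes $A=\{0,0,1,1,2,2\}$ and works with five-agent groups: four agents at $0$ and one at $1$ forces the representative $(0,0)$ (any other pair is at least $4$ times worse), then strategyproofness slides the four agents to $\tfrac12-\theta$ one at a time while the representative stays $(0,0)$; a mirrored group (four at $\tfrac32+\theta$, one at $1$) is forced to $(2,2)$. In the final two-group instance every output consistent with the representatives $\{(0,0),(2,2)\}$ yields $\mathrm{MoA}=\tfrac{7+4\theta}{5}$, while $\mathrm{OPT}=(1,1)$ gives $\tfrac{2+4\theta}{5}$; the numerator $7=4\cdot\tfrac32+1$ and denominator $2=4\cdot\tfrac12+0$ are exactly the ``averaging weights'' you allude to. Two further small corrections to your plan: the paper does not need your two-instance $I^{\star}/I^{\star\star}$ device with \textit{(P3)}, because all four candidate outputs of the final instance are equally bad; and by symmetry \emph{both} groups (not a single pivotal one) attain the inflated average, which is what makes the case analysis over the four possible outputs uniform. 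To turn your proposal into a proof you must supply this (or an equivalent) concrete family of instances and verify the forced representatives step by step.
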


\section{Average-of-Max cost}

In this section, we turn our attention to the last social objective, Average-of-Max cost objective, which is the average of the maximum individual cost among all agents in each group. For the upper bound, we consider the $(1,\frac{3-\sqrt{5}}{2})$-Quantile mechanism, which achieves a distortion of at most $2+\sqrt{5}$ . For the lower bound, the distortion of any strategyproof mechanism is at least $3-\epsilon$, for any $\epsilon>0$.

We first consider the $(1,\beta)$-Quantile mechanism and analyze its performance.
~\\

\noindent \textbf{$\bm{(1,\beta)}$-Quantile Mechanism} 

\begin{itemize}
\item[$\bullet$] Step 1. For each group $d\in D$ , let $y_{d(1)}=t({r_d})$, $y _{d(2)}=s({r_d})$.

\item[$\bullet$] Step 2. Set $z_d:=y_{d(1)}$ for each group  $d\in D$ and return $w_1:=$ the $\lceil\beta\cdot k\rceil$-th  leftmost location in $\{z_d\}_{d\in D}$. For each group  $d\in D$, update $z_d$ as $y_{d(2)}$ if $y_{d(1)}=w_1$, and return $w_2:=$  the $\lceil\beta\cdot k\rceil$-th  leftmost location in $\{z_d\}_{d\in D}$.
\end{itemize} 

\begin{lemma}
    For the Average-of-Max cost, the distortion of the $(1,\beta)$-Quantile mechanism is at most $\max \left\{ 1+\frac{2(1-\beta)}{\beta}, 1 + \frac{2}{1 - \beta}\right\}$.
\end{lemma}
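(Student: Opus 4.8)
The plan is to mirror the argument for the Max-of-Average lemma, exchanging the roles of the within-group and across-group selections and of averaging and maximization. First I would invoke Remark 2 to reduce the output to a single \emph{pivotal} group: the pair $\mathbf w=(w_1,w_2)$ returned by the $(1,\beta)$-Quantile mechanism is exactly the representative pair of some group $d^*$, i.e. $w_1=t(r_{d^*})$ and $w_2=s(r_{d^*})$ are the two candidate locations closest to the rightmost agent $r_{d^*}$ of $d^*$. Since the nearest-candidate map $r_d\mapsto t(r_d)$ is non-decreasing, the $\lceil\beta k\rceil$-th leftmost first-representative corresponds to the $\lceil\beta k\rceil$-th smallest value of $r_d$; hence $d^*$ is the group whose rightmost agent is the $\beta$-quantile among $\{r_d\}_{d\in D}$. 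I would also record that, by convexity of the max-variant cost $c(\cdot,\mathbf w)=\max\{|\cdot-w_1|,|\cdot-w_2|\}$, the per-group maximum $\max_{i\in N_d}c(x_i,\mathbf w)$ is attained at an extreme agent $l_d$ or $r_d$, so $AoM(\mathbf w)=\frac1k\sum_d\max\{c(x_{l_d},\mathbf w),c(x_{r_d},\mathbf w)\}$ only involves the $2k$ extreme agents.

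Next I would collect three elementary inequalities relating the mechanism to an optimum $\mathbf o=(o_{\min},o_{\max})$, writing $o_d:=\max_{i\in N_d}c(x_i,\mathbf o)$ so that $OPT=\frac1k\sum_d o_d$. (i) A \emph{spread bound}: $r_d-l_d\le 2o_d$, because $c(x_{l_d},\mathbf o)+c(x_{r_d},\mathbf o)\ge|l_d-o_{\min}|+|r_d-o_{\min}|\ge r_d-l_d$ and $o_d$ dominates the average of the two summands. (ii) A \emph{facility-offset bound}: since at most one facility of $\mathbf o$ can sit at $t(r_{d^*})$, the second-closest candidate satisfies $\delta(r_{d^*},s(r_{d^*}))\le c(x_{r_{d^*}},\mathbf o)\le o_{d^*}$, so both $w_1,w_2$ lie within $o_{d^*}$ of $r_{d^*}$. (iii) A \emph{cross-group bound}: $\delta(r_d,r_{d^*})\le|r_d-o_{\min}|+|r_{d^*}-o_{\min}|\le o_d+o_{d^*}$. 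Finally, the quantile splits the groups into $G_L=\{d:r_d\le r_{d^*}\}$ and $G_R=\{d:r_d> r_{d^*}\}$ with $|G_L|\ge\beta k$ and $|G_R|\le(1-\beta)k$.

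With these in hand I would bound $o_{d^*}$ by $OPT$ through the counts, distinguishing the position of $r_{d^*}$ relative to the optimal facilities. When $r_{d^*}$ lies left of $o_{\min}$, every group in $G_L$ has $o_d\ge o_{\max}-r_d\ge o_{d^*}$, giving $OPT\ge\beta\,o_{d^*}$, i.e. $o_{d^*}\le\frac1\beta OPT$; when $r_{d^*}$ lies right of $o_{\max}$, every group in $G_R$ has $o_d\ge r_d-o_{\min}\ge o_{d^*}$, giving $o_{d^*}\le\frac1{1-\beta}OPT$; and when $r_{d^*}$ lies between the facilities $o_{d^*}$ is already controlled by the facility spread. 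Combining these with the offset bound (ii) and the spread bound (i) to convert each extreme-agent cost into a multiple of $o_d$ plus a multiple of $o_{d^*}$, then summing and dividing by $k$, the $\beta$-side estimate yields a distortion of $1+\frac{2(1-\beta)}{\beta}$ and the $(1-\beta)$-side estimate yields $1+\frac{2}{1-\beta}$; taking the worse of the two cases gives the stated $\max\{1+\frac{2(1-\beta)}{\beta},\,1+\frac{2}{1-\beta}\}$.

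The hard part will be extracting exactly these two coefficients rather than a weaker constant. A bound that routes every group's cost through the pivot $r_{d^*}$ is too lossy, because a single group's optimal cost $o_{d^*}$ can be as large as $\Theta(k)\cdot OPT$; the quantile counting is therefore essential and must be applied with care, charging the cost incurred on the sparse side of $r_{d^*}$ against the $\beta k$ (resp. $(1-\beta)k$) groups on the dense side so that the ratios $\frac{1-\beta}{\beta}$ and $\frac{1}{1-\beta}$ appear with the correct numerators. Keeping the additive constants tight—so that the leading additive $1$ and the factor $2$ (which traces to the two-facility max-variant cost together with a single triangle inequality) are not inflated—is where the bookkeeping is most delicate.
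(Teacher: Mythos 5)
Your setup (the pivotal group $d^*$ with $w_1=t(r_{d^*})$, $w_2=s(r_{d^*})$, the reduction to extreme agents, and the quantile counting $|G_L|\ge \beta k$) matches the paper's, but the quantitative core of your argument does not deliver the claimed constants, and one of your counting inequalities is false as stated. First, the false step: in the case where $r_{d^*}$ lies left of $o_{\min}$ you claim $o_d\ge o_{\max}-r_d\ge o_{d^*}$ for every $d\in G_L$. The second inequality conflates $o_{d^*}=\max_{i\in N_{d^*}}c(x_i,\mathbf o)$ with $c(x_{r_{d^*}},\mathbf o)=o_{\max}-r_{d^*}$; if the maximum in group $d^*$ is attained at $l_{d^*}$ far to the left, $o_{d^*}$ can be arbitrarily larger than $o_{\max}-r_d$ while no other group's optimal cost grows. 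This is repairable by replacing $o_{d^*}$ with $c(x_{r_{d^*}},\mathbf o)$ throughout (your bound (ii) already shows both $w_1,w_2$ lie within $c(x_{r_{d^*}},\mathbf o)$ of $r_{d^*}$, which is all you need).

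The more serious gap is that even after this repair, your decomposition charges the pivot offset to \emph{every} group: routing agent $i\in N_d$ as $\delta(x_i,w_j)\le \delta(x_i,o_1)+\delta(o_1,r_{d^*})+\delta(r_{d^*},w_j)$ gives $\max_{i\in N_d}c(x_i,\mathbf w)\le o_d+2\,c(x_{r_{d^*}},\mathbf o)$ uniformly, hence $AoM(\mathbf w)\le OPT+2\,c(x_{r_{d^*}},\mathbf o)$, and combining with $c(x_{r_{d^*}},\mathbf o)\le \frac{1}{\beta}OPT$ yields $1+\frac{2}{\beta}$, which is strictly worse than the claimed $1+\frac{2(1-\beta)}{\beta}$ (at $\beta=\frac{3-\sqrt5}{2}$ this is about $6.24$ versus $2+\sqrt5\approx 4.24$). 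The missing idea, which the paper uses in its Case $w_1\le w_2<o_1\le o_2$, is that the $\lceil\beta k\rceil$ groups whose first representative $y_{d(1)}$ lies at or left of $w_1$ incur \emph{no} loss at all: for such $d$ and any $i\in N_d$, since $y_{d(1)}$ is the closest candidate to $r_d$ and $y_{d(1)}\le w_1\le w_2<o_1\le o_2$, one gets $\delta(x_i,w(x_i))\le\delta(x_i,o(x_i))$ directly. Only the remaining at most $(1-\beta)k$ groups pay the additive term $\delta(o_2,w_1)$, which is what places the factor $(1-\beta)$ in the numerator; the denominator $\beta/2$ then comes from lower-bounding $AoM(\mathbf o)$ by the optimal cost of those same $\beta k$ groups. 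You gesture at exactly this ("charging the sparse side against the dense side") but never supply the inequality that makes the loss vanish on the dense side, and without it the $1+\frac{2(1-\beta)}{\beta}$ branch of the lemma is not established.
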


By solving the equation $1 + \frac{2(1 - \beta)}{\beta} = 1 + \frac{2}{1 - \beta}$, we can obtain the following theorem. 
\begin{theorem}
For the Average-of-Max cost, the distortion of the  $(1,\frac{3-\sqrt{5}}{2})$-Quantile mechanism  is at most $2+\sqrt{5}$.
\end{theorem}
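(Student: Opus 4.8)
The plan is to invoke the preceding Lemma and then carry out a short one-variable optimization. By that Lemma, the distortion of the $(1,\beta)$-Quantile mechanism for the Average-of-Max cost is bounded above by $\max\left\{1+\frac{2(1-\beta)}{\beta},\,1+\frac{2}{1-\beta}\right\}$ for every $\beta\in[0,1]$. Since the statement pins down $\beta=\frac{3-\sqrt{5}}{2}$, the entire argument reduces to evaluating this maximum at that specific value; to explain why the value is the natural one, I would also show it is exactly the choice that minimizes the bound.

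First I would study the two expressions inside the maximum as functions of $\beta$. Rewriting $1+\frac{2(1-\beta)}{\beta}=\frac{2}{\beta}-1$ exhibits this branch as strictly decreasing on $(0,1]$, while $1+\frac{2}{1-\beta}$ is strictly increasing on $[0,1)$. Consequently the pointwise maximum of the two is minimized precisely at the crossover point where they coincide, which is obtained by solving $1+\frac{2(1-\beta)}{\beta}=1+\frac{2}{1-\beta}$, exactly the equation flagged in the text preceding the theorem.

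Next I would solve this equation. Clearing denominators yields $(1-\beta)^2=\beta$, i.e.\ the quadratic $\beta^2-3\beta+1=0$, whose roots are $\frac{3\pm\sqrt{5}}{2}$; only $\beta=\frac{3-\sqrt{5}}{2}$ lies in $[0,1]$. Substituting this value gives $1-\beta=\frac{\sqrt{5}-1}{2}$, so that $1+\frac{2}{1-\beta}=1+\frac{4}{\sqrt{5}-1}=1+(\sqrt{5}+1)=2+\sqrt{5}$ after rationalizing, and by construction the decreasing branch attains the same number. Hence the bound supplied by the Lemma equals $2+\sqrt{5}$ at $\beta=\frac{3-\sqrt{5}}{2}$, which is the assertion.

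I do not expect a genuine obstacle here: the only content beyond the Lemma is this elementary minimization, and the sole points requiring care are the rationalization $\frac{4}{\sqrt{5}-1}=\sqrt{5}+1$ and the selection of the admissible root in $[0,1]$. The substantive work—establishing the $\max\{\cdots\}$ upper bound in the first place—has already been carried out in the Lemma, so this theorem is essentially a corollary of it.
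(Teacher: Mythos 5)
Your proposal is correct and matches the paper's own argument: the theorem is obtained exactly by invoking the preceding Lemma and solving $1+\frac{2(1-\beta)}{\beta}=1+\frac{2}{1-\beta}$, which yields $\beta=\frac{3-\sqrt{5}}{2}$ and the common value $2+\sqrt{5}$. Your additional monotonicity observation justifying why this crossover point minimizes the bound is a nice touch but adds nothing beyond what the paper intends.
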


\begin{theorem}
For the Average-of-Max cost, the distortion of any strategyproof mechanism is at least $3-\epsilon$, for any $\epsilon>0$.
\end{theorem}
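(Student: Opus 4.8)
The plan is to exploit a structural coincidence between the two objectives: on any instance in which every group contains exactly one agent, the maximum individual cost within a group equals the average individual cost within that group, since both reduce to the single agent's cost. Hence for such singleton-group instances we have $\text{AoM}(\mathbf{w}\mid I)=\text{AoA}(\mathbf{w}\mid I)$ for \emph{every} facility profile $\mathbf{w}$; in particular the optimal value and an optimum-achieving profile coincide for the two objectives. This lets me reuse the construction from the Average-of-Average lower bound (Theorem~3) essentially verbatim, rather than building a new family of instances from scratch.

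Concretely, I would take $A=\{0,0,1,1\}$ and the same six instances $I_1,\dots,I_6$, each using only singleton groups: $I_1,\dots,I_4$ are single-agent, single-group instances placing the agent at $0$, $\tfrac12-\theta$, $1$, $\tfrac12+\theta$ respectively, and $I_5,I_6$ are two-group instances with one agent per group. The deductions that pin down the mechanism's behaviour are all objective-independent: $M(I_1)=(0,0)$ and $M(I_3)=(1,1)$ to avoid infinite distortion on a lone agent; then $M(I_2)=(0,0)$ and $M(I_4)=(1,1)$ by strategyproofness; then $M(I_5)=(1,1)$ to keep the distortion below $3-\epsilon$; and finally $M(I_6)=(1,1)$ by property \textit{(P3)}, since $I_5$ and $I_6$ induce the same multiset of group representatives $\{(0,0),(1,1)\}$. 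Each of these steps relies only on strategyproofness, on properties \textit{(P1)}--\textit{(P3)}, and on the value of the overall objective, which on singleton instances is identical for AoM and AoA; so every step transfers unchanged.

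It then remains to recompute the objective on the two-group instances to confirm the numbers. On $I_6$, with the forced output $M(I_6)=(1,1)$ the cost is $\text{AoM}(1,1)=\tfrac12\!\left((\tfrac12-\theta)+1\right)=\tfrac34-\tfrac{\theta}{2}$, while $\text{OPT}(I_6)=(0,0)$ gives $\tfrac14+\tfrac{\theta}{2}$. Thus the distortion is at least $\frac{\frac32-\theta}{\frac12+\theta}\ge 3-\epsilon$ for $\theta$ sufficiently small (e.g. $\theta\le\frac{\epsilon}{8-2\epsilon}$), contradicting the assumed bound.

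The only point that truly needs care is verifying that the behaviour-forcing arguments are genuinely objective-independent on singleton groups; once the pointwise equality $\text{AoM}=\text{AoA}$ on singletons is in hand there is no real obstacle, and the proof is essentially a reduction. If a self-contained argument were preferred, the sole extra work would be to evaluate $\text{AoM}(0,0)$, $\text{AoM}(0,1)=\text{AoM}(1,0)$ and $\text{AoM}(1,1)$ directly on $I_5$ and $I_6$, which yield exactly the same values as in Theorem~3 and hence the same contradiction.
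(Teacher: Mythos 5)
Your proof is correct, and its skeleton is exactly the paper's: the same candidate multiset $A=\{0,0,1,1\}$, the same chain of single-group instances pinned down by strategyproofness, the same use of \textit{(P1)}--\textit{(P3)} to force $M(I_6)=(1,1)$ from $M(I_5)=(1,1)$, and the same final ratio $\frac{3/2-\theta}{1/2+\theta}\ge 3-\epsilon$. The one genuine difference is in how the representative $(1,1)$ is pinned down for the ``misaligned'' group in $I_6$: you keep every group a singleton and invoke the pointwise identity $\mathrm{AoM}=\mathrm{AoA}$ on singleton groups to import Theorem~3's instances wholesale, whereas the paper's $I_1$, $I_2$ and the first group of its $I_6$ use \emph{two-agent} groups (agents at $\tfrac12$ and $\tfrac32$, then at $\tfrac12$ and $\tfrac12+\theta$) and re-derive $M(I_1)=M(I_2)=(1,1)$ directly for the AoM objective. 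Your reduction is a clean observation and saves recomputation; the steps you claim are objective-independent (infinite distortion on a lone agent, the individual-cost deviation argument, and the (P3) transfer between $I_5$ and $I_6$, which the paper also applies to the reordered multiset of representatives) indeed are, and your arithmetic on $I_5$ and $I_6$ checks out. Both arguments buy the same bound; neither is stronger than the other.
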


\section{Conclusions and Future Work}

In this paper, we studied a constrained distributed heterogeneous two-facility location problem and showed upper and lower bounds on the distortion of strategyproof mechanisms under four social objectives. There are at least three directions for future work. First, it would be interesting to close the gaps between our lower and upper bounds.  Second, it would be meaningful to consider more than just two facilities to be built and more general metric spaces than just the real line. Third, one could try some other social objectives in  distributed settings (such as minimax envy), and consider  settings where agents have other preferences over the facilities (such as the dual preferences).

\begin{credits}
    \subsubsection{\ackname} This research was supported in part by the National Natural Science Foundation of China (12201590, 12171444) and Natural Science Foundation of Shandong Province (ZR2024MA031).

\subsubsection{\discintname}
\begin{footnotesize}
     The authors declare that they have no conflict of interest.
\end{footnotesize}

\end{credits}

\bibliographystyle{splncs04}
\bibliography{main.bib}
%






\begin{appendix}
\section{Appendix}

\paragraph{Proof of Theorem 2}

    Given any instance $I$ , let $\mathbf{w}=(w_1,w_2)$ be the location profile chosen by the mechanism and $\mathbf{o}=(o_1,o_2)$ be an optimal solution. Without loss of generality, we assume that $w_1\le w_2$ and $o_1\le o_2$. According to the property of this mechanism, there exists a group $d^*$ such that \( w_1 = y_{d^*(1)} = t({m_{d^*}}) \) and $w_2 = y_{d^*(2)} =s(m_{d^*})$. For each group $d\in D$, let $N_{d(1)}\subseteq N_{d}$ be the set of agents in $N_{d}$ where $w(x_i)=w_1$ and  $N_{d(2)}\subseteq N_{d}$ be the set of agents in $N_{d}$ where $w(x_i)=w_2$.  Using the triangle inequality, we can obtain
    \begin{equation}
     \begin{aligned}
  AoA(\mathbf{w}) &= \frac{1}{k}\sum_{d \in D} \left\{ \frac{1}{n_d}\sum_{i \in N_{d(1)}} \delta(x_i, w_1) +\frac{1}{n_d} \sum_{i \in N_{d(2)}} \delta(x_i, w_2) \right\}\\
  & \leq\frac{1}{k} \sum_{d \in D} \left\{\frac{1}{n_d} \sum_{i \in N_{d(1)}} \left[ \delta(x_i, o_1) + \delta(o_1, w_1) \right] 
 + \frac{1}{n_d}\sum_{i \in N_{d(2)}} \left[ \delta(x_i, o_2) + \delta(o_2, w_2) \right] \right\}\\
 & \leq\frac{1}{k} \sum_{d \in D} \left\{\frac{1}{n_d} \sum_{i \in N_{d(1)}} \delta(x_i, o(x_i)) 
 + \frac{1}{n_d}\sum_{i \in N_{d(2)}}  \delta(x_i, o(x_i))  \right\}\\
 & +\frac{1}{k}\sum_{d \in D} \left\{\frac{1}{n_d}\sum_{i \in N_{d(1)}}  \delta(o_1, w_1) +\frac{1}{n_d}\sum_{i \in N_{d(2)}}  \delta(o_2, w_2) \right\}\\
 &= AoA(\mathbf{o})+\frac{1}{k}\sum_{d \in D} \left\{\frac{1}{n_d}\sum_{i \in N_{d(1)}}  \delta(o_1, w_1) +\frac{1}{n_d}\sum_{i \in N_{d(2)}}  \delta(o_2, w_2) \right\}.
\end{aligned}
\end{equation}

\noindent\textbf{Case 1:} \( o_1 \leq o_2 < w_1 \leq w_2 \). Since $\delta(o_1,w_1)\le \delta(o_1,w_2)$ and $\delta(o_2,w_2)\le \delta(o_1,w_2)$, we have
\begin{equation}
     \begin{aligned}
  AoA(\mathbf{w}) &\le AoA(\mathbf{o})+\frac{1}{k}\sum_{d \in D} \left\{\frac{1}{n_d}\sum_{i \in N_{d(1)}}  \delta(o_1, w_1) +\frac{1}{n_d}\sum_{i \in N_{d(1)}}  \delta(o_2, w_2) \right\}\\
  &\le AoA(\mathbf{o})+\delta(o_1,w_2).
\end{aligned}
\end{equation}
Let $S=\{ d\in D\mid y_{d(1)}\ge w_1, y_{d(2)}\ge w_1\}$.
 For each group $d\in D$, since $y_{d(1)}$, $y_{d(2)}$ are adjacent and, \( w_1 \) is the median location in \( \{ z_{d} \}_{d \in D}=\{ y_{d(1)} \}_{d \in D} \) and \( w_2 \) is the median location in the updated \( \{ z_{d} \}_{d \in D} \), we have that \( |S| \geq \frac{1}{2} k \).
For each group $d\in S$, let \( T_d = \{ i \in N_d \mid x_i \geq x_{m_d} \} \). By the definition of $m_d$, $|T_d|\ge \frac{1}{2} n_d$.
Since \( w_1 \) and \( w_2 \) are the closest locations to \( m_{d^*} \), it holds that \( x_{m_{d^*}} \ge  \frac{o_2 + w_2}{2} \). Then, for any \( d \in S \), $i\in T_d$, we have $\delta(x_i, o_1) \geq \delta(o_1, o_2) + \frac{\delta(o_2, w_2)}{2}\ge \frac{\delta(o_1,w_2)}{2}$. Hence,
\begin{equation}
     \begin{aligned}
  AoA(\mathbf{o}) &\ge \frac{1}{k}\sum_{d \in D} \left\{\frac{1}{n_d}\sum_{i \in N_{d}}  \delta(x_i, o_1) \right\}\ge \frac{1}{k}\sum_{d \in S} \left\{\frac{1}{n_d}\sum_{i \in T_{d}}  \delta(x_i, o_1) \right\}\\
  &\ge \frac{1}{k}\sum_{d \in S} \left\{\frac{1}{n_d}\sum_{i \in T_{d}}  \frac{\delta(o_1,w_2)}{2} \right\}\ge \frac{\delta(o_1,w_2)}{8}.
\end{aligned}
\end{equation}

\noindent Therefore, by (4) and (5), we obtain
\begin{equation}
     \begin{aligned}
  AoA(\mathbf{w}) &\le AoA(\mathbf{o})+\delta(o_1,w_2)\le 9\cdot AoA(\mathbf{o}).
\end{aligned}
\end{equation}

\noindent\textbf{Case 2:} \( o_1 < o_2 = w_1< w_2 \). Similar to Case 1, we can obtain $AoA(\mathbf{w}) \le 9\cdot AoA(\mathbf{o}).$

\noindent\textbf{Case 3:} \( w_1 \le w_2 < o_1 \le o_2 \).  In this case, since $\delta(o_1,w_1)\le \delta(o_2,w_1)$ and $\delta(o_2,w_2)\le \delta(o_2,w_1)$, we have
\begin{equation}
     \begin{aligned}
  AoA(\mathbf{w}) \le AoA(\mathbf{o})+\delta(o_2,w_1).
\end{aligned}
\end{equation}
\noindent  Similar to Case 1, let \( S' = \{ d \in D \mid y_{d(1)} \le w_1\}  \). Since \( w_1 \) is the median location in \( \{ y_{d(1)} \}_{d \in D} \), we have that \( |S'| \geq \frac{1}{2} k \). For each group $d\in S'$, let \( T'_d = \{ i \in N_d \mid x_i \le x_{m_d} \} \). By the definition of $m_d$, $|T'_d|\ge \frac{1}{2} n_d$.
Since \( w_1 \) and \( w_2 \) are the closest  locations to \( m_{d^*} \), it holds that \( x_{m_{d^*}}\le \frac{w_1+w_2}{2} \). Then, for any \( d \in S' \), $i\in T'_d$, we have $\delta(x_i, o_2) \geq \delta(o_2, w_2) + \frac{\delta(w_1, w_2)}{2}\ge \frac{\delta(o_2,w_1)}{2}$. Hence,
\begin{equation}
     \begin{aligned}
  AoA(\mathbf{o}) \ge \frac{1}{k}\sum_{d \in S'} \left\{\frac{1}{n_d}\sum_{i \in T'_{d}}  \delta(x_i, o_2) \right\}\ge \frac{1}{k}\sum_{d \in S'} \left\{\frac{1}{n_d}\sum_{i \in T'_{d}}  \frac{\delta(o_2,w_1)}{2} \right\}\ge \frac{\delta(o_2,w_1)}{8}.
\end{aligned}
\end{equation}
Therefore, by (7) and (8), we obtain
\begin{equation}
     \begin{aligned}
  AoA(\mathbf{w}) &\le AoA(\mathbf{o})+\delta(o_2,w_1)\le 9\cdot AoA(\mathbf{o}).
\end{aligned}
\end{equation}
\noindent\textbf{Case 4:}  $w_1 <w_2 = o_1\le o_2$.  Similar to Case 3, we can obtain ${AoA}(\mathbf{w})\le 9\cdot AoA(\mathbf{o}).$

\noindent\textbf{Case 5:} \( o_1 <  w_1 \le w_2 \le o_2 \).  In this case, we have that $AoA(\mathbf{w})\leq AoA(\mathbf{o}) + \delta(o_1, o_2). $

\noindent Clearly, 
\begin{equation}
    \begin{aligned}
       AoA(\mathbf{o}) \ge \frac{1}{k}\sum_{d \in D} \left\{\frac{1}{n_d}\sum_{i \in N_{d}}  \delta(x_i, o_1) \right\},
       AoA(\mathbf{o}) \ge \frac{1}{k}\sum_{d \in D} \left\{\frac{1}{n_d}\sum_{i \in N_{d}}  \delta(x_i, o_2) \right\}
\end{aligned}
\end{equation}
\noindent Using the triangle inequality, we have that
\begin{equation}
    \begin{aligned}
 AoA(\mathbf{o}) &\ge \frac{1}{2k}\sum_{d \in D} \left\{\frac{1}{n_d}\sum_{i \in N_{d}}  \left[\delta(x_i, o_1)+\delta(x_i,o_2)\right]\right\}\\
 &\ge \frac{1}{2k}\sum_{d \in D} \left\{\frac{1}{n_d}\sum_{i \in N_{d}}  \delta(o_1, o_2)\right\}\ge \frac{\delta(o_1,o_2)}{2}.
\end{aligned}
\end{equation}
\noindent Therefore, we can obtain $AoA(\mathbf{w})\le 3\cdot AoA(\mathbf{o}).$

\noindent\textbf{Case 6:} \(o_1= w_1\le w_2<o_2 \). Similar to Case 5, we can obtain $AoA(\mathbf{w})\le 3\cdot AoA(\mathbf{o}).$

\noindent Above all, 
\begin{equation}
    \begin{aligned} 
       AoA(\mathbf{w})\le 9\cdot AoA(\mathbf{o}).
\end{aligned}
\end{equation}
\qed

\paragraph{Proof of Lemma 1}

  Given any instance $I$, let \( \mathbf{w}= (w_1, w_2) \) be the location profile chosen by the mechanism and \( \mathbf{o} = (o_1, o_2) \) be an optimal solution. Assume w.l.o.g. that \( w_1 \leq w_2 \) and \( o_1 \leq o_2 \).  According to the property of this mechanism, there exists a group $d^*$ such that \( w_1 = y_{d^*(1)} = t({\alpha_{d^*}}) \) and $w_2 = y_{d^*(2)} =s(\alpha_{d^*})$.
Denote by \( d' \) a group with the maximum average of individual cost of agent for \( \mathbf{w} \), such that $MoA(\mathbf{w})=\frac{1}{n_{d'}}\sum_{i\in N_{d'}}\delta(x_i,w(x_i) )$.  Let $N_{d'(1)}\subseteq N_{d'}$ be the set of agents in $N_{d'}$ where $w(x_i)=w_1$ and  $N_{d'(2)}\subseteq N_{d'}$ be the set of agents in $N_{d'}$ where $w(x_i)=w_2$.

\noindent\textbf{Case 1:} \( o_1 \leq o_2 < w_1 \leq w_2 \). By the definition of \( d' \) and the triangle inequality, we have that
\begin{equation}
    \begin{aligned}
       MoA(\mathbf{w})
         &=\frac{1}{n_{d'}}\sum_{i \in N_{d'(1)} } \delta(x_i, w_1) + \frac{1}{n_{d'}}\sum_{i \in N_{d'(2)} } \delta(x_i, w_2)\\
             &\leq\frac{1}{n_{d'}} \sum_{i \in N_{d'} } \delta(x_i, o(x_i)) + \frac{1}{n_{d'}}\sum_{i \in N_{d'(1)} } \delta(o_1, w_1) + \frac{1}{n_{d'}}\sum_{i \in N_{d'(2)} } \delta(o_2, w_2)\\  
            &\leq MoA(\mathbf{o}) + \delta(o_1, w_2)  .
\end{aligned}
\end{equation}
\noindent Let $S=\left\{i\in N_{d^*}|x_i\ge x_{\alpha_{d^*}}\right\}$. By the definition of $\alpha_{d^*}$, $\left|S\right|\ge (1-\alpha)\cdot n_{d^*}$. Since \( o_1 \leq o_2 < w_1 \leq w_2 \) and $w_1$, $w_2$ are the closest  locations to $\alpha_{d^*}$, all agents in $S$ are closer to $\mathbf{w}=(w_1,w_2)$ than $\mathbf{o}=(o_1, o_2)$. So we have $\delta(x_i,o_2)\ge \frac{\delta(o_2,w_2)}{2}$ and $\delta(x_i,o_1)\ge \delta(o_1,o_2)+\frac{\delta(o_2,w_2)}{2}$, for any $i\in S$.
Using these properties, we can obtain:
\begin{equation}
    \begin{aligned}
       MoA(\mathbf{o})&\ge\frac{1}{n_{d^*}}\sum_{i \in N_{d^*} } \delta(x_i, o(x_i))\ge\frac{1}{n_{d^*}}\sum_{i \in S } \delta(x_i, o_1) \\
             &\ge(1-\alpha)\cdot\left[\frac{\delta(o_2,w_2)}{2}+\delta(o_1,o_2)\right],
\end{aligned}
\end{equation}

\noindent and
\begin{equation}
    \begin{aligned}
       MoA(\mathbf{o})\ge\frac{1}{n_{d^*}}\sum_{i \in S } \delta(x_i, o_2) 
            \ge(1-\alpha)\cdot\frac{\delta(o_2,w_2)}{2}.
\end{aligned}
\end{equation}

\noindent Then
\begin{equation}
    \begin{aligned} 
       MoA(\mathbf{o})&\ge\frac{(1-\alpha)}{2}\cdot\delta(o_1,w_2).
\end{aligned}
\end{equation}
\noindent Therefore, by (13) and (16), we obtain
\begin{equation}
    \begin{aligned} 
       MoA(\mathbf{w})&\le MoA(\mathbf{o})+\delta(o_1,w_2)\le \left(1+\frac{2}{1-\alpha}\right)\cdot MoA(\mathbf{o}).
\end{aligned}
\end{equation}

\noindent\textbf{Case 2:} \( o_1 <o_2 = w_1 < w_2 \). Similar to Case 1, we can obtain 
\begin{equation}
    \begin{aligned} 
       MoA(\mathbf{w})\le \left(1+\frac{2}{1-\alpha}\right)\cdot MoA (\mathbf{o}).
\end{aligned}
\end{equation}

\noindent\textbf{Case 3:} \( w_1\le w_2 <o_1 \le  o_2 \). Let $L$ be the set of agents in $N_{d'}$ from the first leftmost agent to $\alpha_{d'}$ and $R$ be the set of the remaining agents in $N_{d'}$.
 By the definition of $\alpha_{d'}$, we have that $\left|L\right|=\alpha\cdot n_{d'}$ and $\left|R\right|=(1-\alpha)\cdot n_{d'}$. As $w_1=\max\left\{y_{d(1)}\right\}_{d\in D}$, we can obtain $y_{d'(1)}\le w_1\le w_2<o_1\le o_2$. Since $y_{d'(1)}$ is the closest location to $\alpha_{d'}$, it holds that 
$\delta(x_i,w(x_i))\le \delta(x_i,o(x_i))$, for any $i\in L$. Using these properties, we can obtain
\begin{equation}
    \begin{aligned}
MoA(\mathbf{w}) &= \frac{1}{n_{d'}}\sum_{i \in L} \delta(x_i, w(x_i)) + \frac{1}{n_{d'}}\sum_{i \in R} \delta(x_i, w(x_i))\\ 
&\leq \frac{1}{n_{d'}}\sum_{i \in L} \delta(x_i, o(x_i)) +\frac{1}{n_{d'}} \sum_{i \in R \cap N_{d'(1)}} \delta(x_i,w_1)+ \frac{1}{n_{d'}}\sum_{i \in R \cap N_{d'(2)}} \delta(x_i,w_2)\\
&\leq \frac{1}{n_{d'}}\sum_{i \in L} \delta(x_i, o(x_i)) + \frac{1}{n_{d'}}\sum_{i \in R \cap N_{d'(1)}} [\delta(x_i,o_1)+\delta(o_1,w_1)]  \\ 
&+ \frac{1}{n_{d'}}\sum_{i \in R \cap N_{d'(2)}} [\delta(x_i, o_2) + \delta(o_2, w_2)]  \\ 
&\leq MoA(\mathbf{o}) + (1 - \alpha) \cdot \delta(o_2, w_1). 
\end{aligned}
\end{equation}

\noindent Since all agents in $L$ are closer to $\mathbf{w}=(w_1,w_2)$ than $\mathbf{o}=(o_1,o_2)$,  we have that $\delta(x_i,o_2)\ge \frac{\delta(w_1,w_2)}{2}+\delta(o_2,w_2)$ (when $w_1\ne w_2$) and $\delta(x_i,o_2)\ge \frac{\delta(o_1,w_2)}{2}+\delta(o_1,o_2)$ (when $w_1=w_2$), for any $i\in L$. Thus,
\begin{equation}
    \begin{aligned}
 MoA(\mathbf{o})&\ge\frac{1}{n_{d'}}\sum_{i \in N_{d'} } \delta(x_i, o(x_i))\ge\frac{1}{n_{d'}}\sum_{i \in L } \delta(x_i, o_2) \\
             &\ge\alpha\cdot\left[\frac{\delta(w_1,w_2)+\delta(o_2,w_2)}{2} \right]\ge \frac{\alpha}{2}\cdot\delta(o_2,w_1).
\end{aligned}
\end{equation}

\noindent Therefore, by (19) and (20), we obtain
\begin{equation}
    \begin{aligned} 
       MoA(\mathbf{w})\le MoA(\mathbf{o})+(1-\alpha)\cdot\delta(o_2,w_1)\le \left(1+\frac{2(1-\alpha)}{\alpha}\right)\cdot MoA(\mathbf{o}).
\end{aligned}
\end{equation}

\noindent\textbf{Case 4:} \(  w_1 <w_2=o_1\le o_2 \). Similar to Case 3, we can obtain $MoA(\mathbf{w})\le \left(1+\frac{2(1-\alpha)}{\alpha}\right)\cdot MoA(\mathbf{o}).$

\noindent\textbf{Case 5:} \(o_1< w_1\le w_2\le o_2 \). By the definition of $d'$ and the triangle inequality, we have that
\begin{equation}
    \begin{aligned}
       MoA(\mathbf{w})&=\frac{1}{n_{d'}}\sum_{i \in N_{d'(1)} } \delta(x_i, w_1) + \frac{1}{n_{d'}}\sum_{i \in N_{d'(2)} } \delta(x_i, w_2)\\
 &\leq MoA(\mathbf{o}) + \delta(o_1, o_2). 
\end{aligned}
\end{equation}

\noindent Clearly, $MoA(\mathbf{o})\ge \frac{1}{n_{d'}}\sum_{i \in N_{d'} } \delta(x_i, o_1)$ and $MoA(\mathbf{o})\ge\frac{1}{n_{d'}} \sum_{i \in N_{d'} } \delta(x_i, o_2)$. Adding the two inequalities together and using again the triangle inequality, we have that
\begin{equation}
    \begin{aligned}
 MoA(\mathbf{o})&\ge\frac{1}{2n_{d'}}\cdot\sum_{i \in N_{d'} } [\delta(x_i, o_1)+\delta(x_i,o_2)]\ge\frac{1}{2}\cdot\delta(o_1,o_2).
\end{aligned}
\end{equation}

\noindent Therefore, we can obtain $MoA(\mathbf{w})\le 3\cdot MoA(\mathbf{o}).$

\noindent\textbf{Case 6:} \(o_1=w_1\le w_2<o_2 \). Similar to Case 5, we can obtain $MoA(\mathbf{w})\le 3\cdot MoA(\mathbf{o}).$

\noindent Above all, we can obtain an upper bound of $\max \left\{ 1 + \frac{2(1 - \alpha)}{\alpha}, \, 1 + \frac{2}{1 - \alpha},3 \right\}=\max \left\{ 1 + \frac{2(1 - \alpha)}{\alpha}, \, 1 + \frac{2}{1 - \alpha} \right\}$.
\qed

\paragraph{Proof of Theorem 7}

   Assume for  contradiction that there exists a strategyproof mechanism $M$ that has a  distortion strictly smaller than  $\frac{7}{2}-\varepsilon$, for some $\varepsilon>0$. Let $\theta>0$ be an infinitesimal.  We will reach a contradiction through considering the following several instances with the set of candidate locations $A=\left\{0,0,1,1,2,2\right\}$.

\textbf{Instance $I_1$}: Consider an instance $I_1$ with five agents in a single group, with the location profile $\mathbf{x^1}$, where $x_1^1=x_2^1=x_3^1=x_4^1=0$ and $x_5^1=1$. We can obtain that $MoA (0,0) = 1/5, MoA (0,1) = MoA (1,0)=1, MoA(1,1)=4/5, MoA (1,2) = MoA (2,1) = MoA (2,0) = MoA (0,2) = MoA (2,2) = 9/5.$  Clearly, $OPT(I_1)=(0,0)$, so it must be that  $M(I_1)=(0,0)$. Otherwise, the distortion of $M$ is at least $4$, which contradicts the assumption.

\textbf{Instance ($I_2$)}: Consider an instance $I_2$ with five agents in a single group, with the location profile $\mathbf{x^2}$,  where $x_1^2=\frac{1}{2}-\theta$ ,\ $x_2^2=x_3^2=x_4^2=0$ and $x_5^2=1$. If $M(I_2)\ne(0,0)$, then $c(x_1^2,M(I_2))\ge \frac{1}{2}+\theta$. Considering $c(x_1^2,M(I_1))=\frac{1}{2}-\theta$,  agent $1$ at $x_1^2$ can decrease her cost by misreporting her location as $0$, in contradiction to strategyproofness. Thus, $M(I_2)=(0,0)$.

\textbf{Instance $I_3$}: Consider an instance $I_3$ with five agents in a single group, with the location profile $\mathbf{x^3}$, where $x_1^3=x_2^3=x_3^3=x_4^3=\frac{1}{2}-\theta$ and $x_5^3=1$. According to Instance $I_2$, we can deduce that $M(I_3)=(0,0)$.

\textbf{Instance $I_4$}: Consider an instance $I_4$ with five agents in a single group, with the location profile $\mathbf{x^4}$, where $x_1^4=x_2^4=x_3^4=x_4^4=2$ and $x_5^4=1$.  Since Instance $I_4$ and Instance $I_1$ have symmetry with respect to $A$ , we have that $M(I_4)=(2,2)$.

\textbf{Instance $I_5$}: Consider an instance $I_5$ with five agents in a single group, with the location profile $\mathbf{x^5}$, where $x_1^5=x_2^5=x_3^5=x_4^5=\frac{3}{2}+\theta$ and $x_5^5=1$. Similar to Instance $I_3$,  we have that $M(I_5)=(2,2)$.

 Finally, to reach a contradiction, we consider the following Instance $I_6$ with two groups:

\begin{itemize}  
    \item In group $1$, there are four agents at \( \frac{1}{2} - \theta \) and one agent at $1$.  
    \item In group $2$, there are four agents at \( \frac{3}{2} + \theta \) and one agent at $1$.  
\end{itemize}  

\noindent Considering  $M(I_3)=(0,0)$ and $M(I_5)=(2,2)$,  by  \textit{(P1)} and \textit{(P2)}, the representatives of group $1$ and group $2$  are \( (0,0) \) and \( (2,2) \), respectively.
 Therefore, \( M(I_6) = (0,0), (0,2), (2,0) \) or \( (2,2) \) and we can obtain that  $MoA(M|I_6) = \frac{7 + 4\theta}{5}$.
However, \( OPT(I_6) = (1,1) \) and   $MoA(OPT|I_6) = \frac{2 + 4\theta}{5}$.  
That is, the distortion of mechanism \( M \) is at least   
$\frac{7 + 4\theta}{2 + 4\theta} \geq \frac{7}{2} - \varepsilon$,  for $\theta\le\frac{\epsilon}{5-2\epsilon}$; which is a contradiction.  
\qed

\paragraph{Proof of Lemma 2}

   Given any instance $I$, let \( \mathbf{w}= (w_1, w_2) \) be the solution chosen by the mechanism and \( \mathbf{o} = (o_1, o_2) \) be an optimal solution. W.l.o.g., we assume that \( w_1 \leq w_2 \) and \( o_1 \leq o_2 \).  According to the property of this mechanism, there exists a group $d^*$ such that \( w_1 = y_{d^*(1)} = t(r_{d^*}) \) and $w_2 = y_{d^*(2)} = s(r_{d^*})$. 
For each group $d$, denote by $i_d$ and $i'_{d}$ the agents in $N_d$ with the maximum  individual cost for $\mathbf{w}$ and $\mathbf{o}$, respectively. Then we have 
\begin{equation}
    \begin{aligned}
       &AoM(\mathbf{w}) =\frac{1}{k} \sum_{d \in D} \left\{\max_{i \in N_d} \delta(x_i, w(x_i))\right\} = \frac{1}{k}\sum_{d \in D} \delta(x_{i_{d}}, w(x_{i_{d}})),\\
      & AoM(\mathbf{o}) = \frac{1}{k}\sum_{d \in D}\left\{ \max_{i \in N_d} \delta(x_i, o(x_i))\right\} = \frac{1}{k}\sum_{d \in D} \delta(x_{i_{d'}}, o(x_{i'_{d}})).
\end{aligned}
\end{equation}
\noindent Let $D_1\subseteq D$ ($D_2\subseteq D$) be the set of groups where $w(x_{i_d})=w_1$ for any $d\in D_1$ ($w(x_{i_d})=w_2$ for any $d\in D_2$ ). 

\noindent\textbf{Case 1:} \( o_1 \leq o_2 < w_1 \leq w_2 \).  Using the triangle inequality, and since \( \delta(x_{i_d}, o(x_{i_{d}})) \leq \delta(x_{i'_{d}}, o(x_{i'_{d}})) \), we have  
\begin{equation}
    \begin{aligned}
       AoM(\mathbf{w})& = \frac{1}{k}\sum_{d \in D_1} \delta(x_{i_d}, w_1) +\frac{1}{k} \sum_{d \in D_2} \delta(x_{i_d}, w_2) \\
       &\leq \frac{1}{k}\sum_{d \in D} \delta(x_{i_d}, o(x_{i_d})) + \frac{|D_1|}{k} \cdot \delta(o_1, w_1) +\frac{|D_2|}{k}  \cdot \delta(o_2, w_2)\\
       &\leq AoM(\mathbf{o}) +  \delta(o_1, w_2).
\end{aligned}
\end{equation}

\noindent Let $S=\{ d\in D\mid y_{d(1)}\ge w_1, y_{d(2)}\ge w_1\}$.
 For each group $d\in D$, since $y_{d(1)}$, $y_{d(2)}$ are adjacent and,  \( w_1 \) is the \( (\beta \cdot k) \)-th leftmost location in \( \{ y_{d(1)} \}_{d \in D} \) and \( w_2 \) is the \( (\beta \cdot k) \)-th leftmost location in the  updated \( \{ z_{d} \}_{d \in D}\), we have that \( |S| \geq (1 - \beta) \cdot k \). Since \( w_1 \) and \( w_2 \) are the closest locations to \( x_{r_{d^*}} \), it holds that \( x_{r_{d^*}}\ge  \frac{o_2 + w_2}{2} \). Then, for any \( d \in S \), we have $\delta(x_{r_d}, o_1) \geq \delta(o_1, o_2) + \frac{\delta(o_2, w_2)}{2}$. Hence,
\begin{equation}
    \begin{aligned}
       AoM(\mathbf{o})\ge \frac{1}{k}\sum_{d \in S} \delta(x_{r_d}, o_1)  \ge \frac{(1-\beta)}{2}\cdot\delta(o_1,w_2).
\end{aligned}
\end{equation}

\noindent Now, by (25) and (26), we have that
\begin{equation}
    \begin{aligned}
       AoM(\mathbf{w})\leq AoM(\mathbf{o}) +  \delta(o_1, w_2) \le \left(1+\frac{2}{1-\beta}\right) \cdot AoM(\mathbf{o}).    
\end{aligned}
\end{equation}

\noindent\textbf{Case 2:} \( o_1 < o_2 = w_1 <w_2 \). Similar to Case 1, we can obtain $AoM(\mathbf{w})\le \left(1+\frac{2}{1-\beta}\right)\cdot AoM(\mathbf{o})$

\noindent\textbf{Case 3:} \( w_1 \le w_2 < o_1 \le o_2 \). Denote by \( L \) the set of \( (\beta \cdot k) \) groups from the one with the leftmost location in \( \{ y_{d(1)} \}_{d \in D} \) to the one with the \( (\beta \cdot k) \)-th location in \( \{ y_{d(1)} \}_{d \in D} \) and \( R \) the set of the remaining  groups. Then, for every \( d \in L \), $i\in N_d$, since \( y_{d(1)} \) is the closest location to \( x_{r_d} \) and \( y_{d(1)} \leq w_1 \le w_2< o_1 \leq o_2 \),  it holds that  $\delta(x_i, w(x_i)) = \delta(x_i,w_2)\le  \delta(x_i, o(x_i)) = \delta(x_i, o_2)$.
\noindent Now, by the triangle inequality, we have that 
\begin{equation}
    \begin{aligned}
      AoM(\mathbf{w})& =\frac{1}{k} \sum_{d \in L} \delta(x_{i_d}, w_2) +\frac{1}{k} \sum_{d \in R} \delta(x_{i_d}, w(x_{i_d})) \\
      &\le\frac{1}{k}\sum_{d \in L} \delta(x_{i_d}, o(x_{i_d})) +\frac{1}{k} \sum_{d \in R\cap D_1} \delta(x_{i_d}, w_1) +\frac{1}{k} \sum_{d \in R\cap D_2} \delta(x_{i_d}, w_2)  \\
      &\le AoM(\mathbf{o})+(1-\beta) \cdot \delta(o_2, w_1) .
\end{aligned}
\end{equation}

\noindent Since \( w_1 \) and \( w_2 \) are the closest  locations to \( x_{r_{d^*}} \), it holds that 

\[  
x_{r_{d^*}} \leq \frac{w_1 + w_2}{2} \quad \text{(when \( w_1 \neq w_2 \))} \quad \text{or} \quad x_{r_{d^*}} \leq \frac{w_2 + o_1}{2} \quad \text{(when \( w_1 = w_2 \))}  .
\]  Then, for every \( d \in L \), we have   $\delta(x_{r_d}, o_2) \geq  \frac{\delta(w_2, o_2) +\delta(w_1, w_2)}{2}.$

\noindent  Hence,  
\begin{equation}
    \begin{aligned}
     {AoM}(\mathbf{o}) \geq\frac{1}{k} \sum_{d \in L} \delta(x_{r_d}, o_2) \geq \frac{\beta }{2} \cdot \delta(o_2, w_1) . 
\end{aligned}
\end{equation}

\noindent Now, by (28) and(29) we have that 
\begin{equation}
    \begin{aligned}
      {AoM}(\mathbf{w})\le AoM(\mathbf{o})+(1-\beta)\cdot\delta(o_2,w_1)\le \left(1+\frac{2(1-\beta)}{\beta}\right)\cdot AoM(\mathbf{o}).
\end{aligned}
\end{equation}

\noindent\textbf{Case 4:} \( w_1 < w_2 = o_1 \le o_2 \).  Similar to Case 3, we can obtain ${AoM}(\mathbf{w})\le \left(1+\frac{2(1-\beta)}{\beta}\right)\cdot AoM(\mathbf{o})$.

\noindent\textbf{Case 5:} \( o_1 < w_1 \le w_2 \le o_2 \).  By the triangle inequality, we have that
\begin{equation}
    \begin{aligned}
      AoM(\mathbf{w})& =\frac{1}{k} \sum_{d \in D} \delta(x_{i_d}, w(x_{i_d}))\leq AoM(\mathbf{o}) + \delta(o_1, o_2).  
\end{aligned}
\end{equation}

\noindent Clearly, $AoM(\mathbf{o})\ge\frac{1}{k}\sum_{d \in D} \delta(x_{i'_{d}}, o_1)$ and $AoM(\mathbf{o})\ge\frac{1}{k}\sum_{d \in D} \delta(x_{i'_{d}}, o_2)$. Using again the triangle inequality, we have that
\begin{equation}
    \begin{aligned}
 AoM(\mathbf{o})&\ge\frac{1}{2k}\cdot\sum_{d \in D } [\delta(x_{i'_{d}}, o_1)+\delta(x_{i'_{d}},o_2)]\ge\frac{1}{2}\cdot\delta(o_1,o_2).
\end{aligned}
\end{equation}

\noindent Therefore, we can obtain $AoM(\mathbf{w})\le 3\cdot AoM(\mathbf{o}).$

\noindent\textbf{Case 6:} \(o_1=w_1\le w_2<o_2 \). Similar to Case 5, we can obtain $AoM(\mathbf{w})\le 3\cdot AoM(\mathbf{o}).$

\noindent Above all, we can obtain an upper bound of $\max \left\{ 1 + \frac{2(1 - \beta)}{\beta}, \, 1 + \frac{2}{1 - \beta},3 \right\}=\max \left\{ 1 + \frac{2(1 - \beta)}{\beta}, \, 1 + \frac{2}{1 - \beta} \right\}$.  
\qed

\paragraph{Proof of Theorem 9}

    Assume for a contradiction that there exists a strategyproof mechanism $M$ that has a distortion strictly smaller than $3-\epsilon$, for some $\epsilon>0$. Let $\theta>0$ be an infinitesimal. We will reach a contradiction through considering the following several instances with set of candidate locations $A=\left\{0,0,1,1\right\}$.

\textbf{Instance $I_1$}: Consider an instance $I_1$ with two agents in a single group, with the location profile $\mathbf{x^1}$, where $x_1^1=\frac{1}{2}$, $x_2^1=\frac{3}{2}$. We can obtain that $AoM (0,0) = AoM(0,1)=AoM(1,0)=\frac{3}{2}, AoM (1,1) = \frac{1}{2}$. Clearly, $OPT(I_1)=(1,1)$, so it must be that $M(I_1)=(1,1)$. Otherwise, the distortion of $M$ is at least $3$, which contradicts the assumption.

\textbf{Instance $I_2$}: Consider an instance $I_2$ with two agents in a single group, with the location profile $\mathbf{x^2}$, where $x_1^2=\frac{1}{2}$, $x_2^2=\frac{1}{2}+\theta$. If $M(I_2)\ne (1,1)$, then $c(x_2^2, M(I_2))=\frac{1}{2}+\theta$. Considering $c(x_2^2,M(I_1))=\frac{1}{2}-\theta$, agent 2 at $x_{2}^2$ can decrease her cost by misreporting her location as $\frac{3}{2}$, in contradiction to strategyproofness. Therefore, $M(I_2)=(1,1)$.

\textbf{Instance $I_3$}: Consider an instance $I_3$ with one agent in a single group, with the location profile $\mathbf{x^3}$, where $x_1^3=0$. In this case, the group representative chosen by mechanism $M$ must be $(0,0)$, i.e., $M(I_3)=(0,0)$. Otherwise, the distortion of $M$ is infinite.

\textbf{Instance $I_4$}: Consider an instance $I_4$ with one agent in a single group, with the location profile $\mathbf{x^4}$, where $x_{1}^4=\frac{1}{2}-\theta$. If $M(I_4)\ne (0,0)$, then $c(x_1^4, M(I_4))=\frac{1}{2}+\theta$. Considering $c(x_1^4,M(I_3))=\frac{1}{2}-\theta$, agent 1 at $x_{1}^4$ can decrease her cost by misreporting her location as $0$. Therefore, to maintain strategyproofness, $M(I_4)=(0,0)$.

\textbf{Instance $I_5$}: Consider an Instance $I_5$ with two groups:

\begin{itemize}  
    \item In group $1$, there is one agent at $\frac{1}{2}-\theta$, by Instance $I_4$, the representative  of this group is $(0,0)$.  
    \item In group $2$, there is one agent at $1$, similar to Instance $I_3$, the representative of this group is $(1,1)$. 
\end{itemize} 

\noindent   Since $AoM(0,0)=\frac{3}{4}-\frac{\theta}{2}$, $AoM(0,1)=AoM(1,0)=\frac{3}{4}+\frac{\theta}{2}$, $AoM(1,1)=\frac{1}{4}+\frac{\theta}{2}$, $M$ must output $(1,1)$ as the overall facility location profile, i.e., $M(I_5)=(1,1)$. Otherwise, the distortion is at least $\frac{\frac{3}{2}-\theta}{\frac{1}{2}+\theta}\ge3-\epsilon$, for $\theta\le\frac{\epsilon}{8-2\epsilon}$.
~\\

\noindent Now, we can reach a contradiction by considering the following instance $I_6$ with two groups:

\begin{itemize}  
    \item In  group $1$, there are two agents at $\frac{1}{2}$, $\frac{1}{2}+\theta$, respectively, by Instance $I_2$, the representative of this group is $(1,1)$.  
    \item In group $2$, there is one agent at $0$, by Instance $I_3$, the representative of this group is $(0,0)$. 
\end{itemize} 

\noindent According to Instance $I_5$ and \textit{(P3)}, we have that $M(I_6)=(1,1)$. Since $AoM(0,0)=\frac{1}{4}+\frac{\theta}{2},  AoM(0,1)=AoM(1,0)=\frac{3}{4}+\frac{\theta}{2},AoM(1,1)=\frac{3}{4}$, the distortion of $M$ is at least $\frac{\frac{3}{2}}{\frac{1}{2}+\theta}\ge3-\epsilon$, for $\theta\le\frac{\epsilon}{8-2\epsilon}$, which is a contradiction.
\qed

\end{appendix}

\end{document}